\documentclass[a4paper,UKenglish,cleveref, autoref, thm-restate]{lipics-v2021}

%%%%%%%%%%%%%%%%%%%%%%%%%%%%%%%%%%%%%%%%%%%%%%%%
% Uncomment to upload to ArXiv
\pdfoutput=1
\nolinenumbers
%%%%%%%%%%%%%%%%%%%%%%%%%%%%%%%%%%%%%%%%%%%%%%%%

%This is a template for producing LIPIcs articles. 
%See lipics-v2021-authors-guidelines.pdf for further information.
%for A4 paper format use option "a4paper", for US-letter use option "letterpaper"
%for british hyphenation rules use option "UKenglish", for american hyphenation rules use option "USenglish"
%for section-numbered lemmas etc., use "numberwithinsect"
%for enabling cleveref support, use "cleveref"
%for enabling autoref support, use "autoref"
%for anonymousing the authors (e.g. for double-blind review), add "anonymous"
%for enabling thm-restate support, use "thm-restate"
%for enabling a two-column layout for the author/affilation part (only applicable for > 6 authors), use "authorcolumns"
%for producing a PDF according the PDF/A standard, add "pdfa"

%\pdfoutput=1 %uncomment to ensure pdflatex processing (mandatatory e.g. to submit to arXiv)
%\hideLIPIcs  %uncomment to remove references to LIPIcs series (logo, DOI, ...), e.g. when preparing a pre-final version to be uploaded to arXiv or another public repository

%\graphicspath{{./graphics/}}%helpful if your graphic files are in another directory

\usepackage{comment}

\bibliographystyle{plainurl}% the mandatory bibstyle

\title{
Sorting Finite Automata via Partition Refinement 
} %TODO Please add

% \titlerunning{} %TODO optional, please use if title is longer than one line

%\author{Jane {Open Access}}{Dummy University Computing Laboratory, [optional: Address], Country \and My second affiliation, Country \and \url{http://www.myhomepage.edu} }{johnqpublic@dummyuni.org}{https://orcid.org/0000-0002-1825-0097}{(Optional) author-specific funding acknowledgements}
\author{Ruben {Becker}}{Ca' Foscari University of Venice, Italy}{rubensimon.becker@unive.it}{https://orcid.org/0000-0002-3495-3753}{}
\author{Manuel {C\'aceres}}{University of Helsinki, Finland \and \url{https://me.ariel.computer/}}{manuel.caceresreyes@helsinki.fi}{https://orcid.org/0000-0003-0235-6951}{Funded by the Academy of Finland (grants No. 352821, 328877).} 
\author{Davide {Cenzato}}{Ca' Foscari University of Venice, Italy}{davide.cenzato@unive.it}{https://orcid.org/0000-0002-0098-3620}{}
\author{Sung-Hwan {Kim}}{Ca' Foscari University of Venice, Italy}{sunghwan.kim@unive.it}{https://orcid.org/0000-0002-1117-5020}{}
\author{Bojana {Kodric}}{Ca' Foscari University of Venice, Italy}{bojana.kodric@unive.it}{https://orcid.org/0000-0001-7242-0096}{}
\author{Francisco {Olivares}}{CeBiB --- Centre for Biotechnology and Bioengineering \and
Department of Computer Science, University of Chile, Chile}{folivares@uchile.cl}{https://orcid.org/0000-0001-7881-9794}{Funded by Ph.D Scholarship 21210579, ANID, Chile.}
\author{Nicola {Prezza}}{Ca' Foscari University of Venice, Italy}{nicola.prezza@unive.it}{https://orcid.org/0000-0003-3553-4953}{}

%\relatedversion[cite={bibtex-reference}]{Full Version}{http://arxiv.org/abs/2305.05129}

\funding{\textit{Ruben Becker, Davide Cenzato, Sung-Hwan Kim, Bojana Kodric, Nicola Prezza}: Funded by the European Union (ERC, REGINDEX, 101039208). Views and opinions expressed are however those of the author(s) only and do not necessarily reflect those of the European Union or the European Research Council. Neither the European Union nor the granting authority can be held responsible for them.}
%TODO mandatory, please use full name; only 1 author per \author macro; first two parameters are mandatory, other parameters can be empty. Please provide at least the name of the affiliation and the country. The full address is optional. Use additional curly braces to indicate the correct name splitting when the last name consists of multiple name parts.

% \author{Joan R. Public\footnote{Optional footnote, e.g. to mark corresponding author}}{Department of Informatics, Dummy College, [optional: Address], Country}{joanrpublic@dummycollege.org}{[orcid]}{[funding]}

%\authorrunning{J. Open Access and J.\,R. Public}
%\authorrunning{R. Becker, M. C\'aceres, D. Cenzato, S. Kim, B. Kodric, F. Olivares, and N. Prezza}
\authorrunning{Ruben Becker et al.}
%TODO mandatory. First: Use abbreviated first/middle names. Second (only in severe cases): Use first author plus 'et al.'

\Copyright{Ruben Becker, Manuel C\'aceres, Davide Cenzato, Sung-Hwan Kim, Bojana Kodric, Francisco Olivares, and Nicola Prezza} %TODO mandatory, please use full first names. LIPIcs license is "CC-BY";  http://creativecommons.org/licenses/by/3.0/

\ccsdesc[500]{Theory of computation~Sorting and searching}
\ccsdesc[500]{Theory of computation~Graph algorithms analysis}
\ccsdesc[500]{Theory of computation~Pattern matching}
%TODO mandatory: Please choose ACM 2012 classifications from https://dl.acm.org/ccs/ccs_flat.cfm 

\keywords{Wheeler automata, prefix sorting, pattern matching, graph compression, sorting, partition refinement} %TODO mandatory; please add comma-separated list of keywords

\category{} %optional, e.g. invited paper

%\relatedversion{Full Version: \url{http://arxiv.org/abs/2305.05129}} %optional, e.g. full version hosted on arXiv, HAL, or other respository/website
%\relatedversiondetails[linktext={opt. text shown instead of the URL}, cite=DBLP:books/mk/GrayR93]{Classification (e.g. Full Version, Extended Version, Previous Version}{URL to related version} %linktext and cite are optional
\relatedversion{This is the full version of the paper \cite{ESAsubmission} that appeared in the Proceedings of the 31st Annual European Symposium on Algorithms (ESA 2023).}
%\relatedversiondetails{Full Version}{http://arxiv.org/abs/2305.05129}

%\supplement{}%optional, e.g. related research data, source code, ... hosted on a repository like zenodo, figshare, GitHub, ...
%\supplementdetails[linktext={opt. text shown instead of the URL}, cite=DBLP:books/mk/GrayR93, subcategory={Description, Subcategory}, swhid={Software Heritage Identifier}]{General Classification (e.g. Software, Dataset, Model, ...)}{URL to related version} %linktext, cite, and subcategory are optional

%\funding{(Optional) general funding statement \dots}%optional, to capture a funding statement, which applies to all authors. Please enter author specific funding statements as fifth argument of the \author macro.

\acknowledgements{}%optional

%\nolinenumbers %uncomment to disable line numbering

%%%%%%%%%%%%%% added to template %%%%%%%%%%%%%%%%
\usepackage{environ}
%!TEX root = ./main.tex

% math functions

\DeclareMathOperator{\IS}{IS}

\let\epsilon\varepsilon
\let\eps\varepsilon

% mathds

% numbers

%mathcals

\newcommand{\PPP}{\mathcal{P}}
\newcommand{\QQQ}{\mathcal{Q}}
\newcommand{\AAA}{\mathcal{A}}
\newcommand{\XXX}{\mathcal{X}}
\newcommand{\BBB}{\mathcal{B}}

% other mathcal symbols commands
\newcommand{\A}{\mathcal{A}}

\usepackage{xspace}

% algorithm2e
\usepackage[linesnumbered,ruled,vlined,boxed]{algorithm2e}
\newlength{\commentWidth}
\setlength{\commentWidth}{7cm}

\let\oldnl\nl
\newcommand{\nonl}{\renewcommand{\nl}{\let\nl\oldnl}}
\DontPrintSemicolon
\SetKwInOut{Input}{Input}\SetKwInOut{Output}{Output}

% tikz
\usepackage{tikz}
\usetikzlibrary{decorations.pathreplacing}
\usetikzlibrary{plotmarks}
\usetikzlibrary{positioning,automata,arrows}
\usetikzlibrary{shapes.geometric}
\usetikzlibrary{decorations.markings}
\usetikzlibrary{positioning, shapes, arrows}

%Editor-only macros:: begin (do not touch as author)%%%%%%%%%%%%%%%%%%%%%%%%%%%%%%%%%%
\EventEditors{Inge Li G{\o}rtz, Martin Farach-Colton, Simon J. Puglisi, and Grzegorz Herman}
\EventNoEds{4}
\EventLongTitle{31st Annual European Symposium on Algorithms (ESA 2023)}
\EventShortTitle{ESA 2023}
\EventAcronym{ESA}
\EventYear{2023}
\EventDate{September 4--6, 2023}
\EventLocation{Amsterdam, the Netherlands}
\EventLogo{}
\SeriesVolume{274}
\ArticleNo{15}
%%%%%%%%%%%%%%%%%%%%%%%%%%%%%%%%%%%%%%%%%%%%%%%%%%%%%%

\begin{document}

\maketitle
%TODO mandatory: add short abstract of the document

\begin{abstract}
    Wheeler nondeterministic finite automata (WNFAs) were introduced in (Gagie et al., TCS 2017) as a powerful generalization of prefix sorting from strings to labeled graphs. WNFAs admit optimal solutions to classic hard problems on labeled graphs and languages such as compression and regular expression matching. 
    The problem of deciding whether a given NFA is Wheeler is known to be NP-complete (Gibney and Thankachan, ESA 2019). Recently, however, Alanko et al.\ (Information and Computation 2021) showed how to side-step this complexity by switching to \emph{preorders}: letting $Q$ be the set of  states and $\delta$ the set of transitions,
    they provided a
    $O(|\delta|\cdot|Q|^2)$-time algorithm 
    computing a totally-ordered \emph{partition} (i.e.\ equivalence relation) of the WNFA's states such that 
    (1) equivalent states recognize the same regular language, and
    (2) the order of (the classes of) non-equivalent states 
    is consistent with any Wheeler order, when one exists. 
    As a result, the output is a preorder of the states as useful for pattern matching as standard Wheeler orders.
    
    Further extensions of this line of work (Cotumaccio et al., SODA 2021 and DCC 2022) generalized these concepts to arbitrary NFAs by introducing \emph{co-lex partial preorders}: in general, any NFA admits a partial preorder of its states reflecting the co-lexicographic order of their accepted strings; the smaller the width of such preorder is, the faster regular expression matching queries can be performed. To date, the fastest algorithm for computing the smallest-width partial preorder on NFAs runs in 
    $O(|\delta|^2 + |Q|^{5/2})$ time (Cotumaccio, DCC 2022), while on DFAs the same task can be accomplished in
    $O(\min(|Q|^2\log|Q|, |\delta|\cdot |Q|))$ time (Kim et al., CPM 2023). 

    In this paper, we provide much more efficient solutions to the 
    co-lex order computation problem. Our results are achieved by
    extending a classic algorithm for the relational coarsest partition refinement problem of Paige and Tarjan
    to work with \emph{ordered} partitions.
    More specifically, we provide a $O(|\delta| \log |Q|)$-time algorithm computing a co-lex total preorder when the input is a Wheeler NFA, and an algorithm with the same time complexity computing the smallest-width co-lex partial order of any DFA. 
    In addition, we present implementations of our algorithms and show that they are very efficient also in practice.
\end{abstract}

\section{Introduction}

The classical \emph{pattern matching} problem between two strings $S$ (the text) and $P$ (the pattern) over alphabet $\Sigma$ asks to find the substrings of $S$ matching $P$. Although many algorithms solving the \emph{on-line} version of the problem exist, in many scenarios it is possible to pre-process $S$ \emph{off-line} into an \emph{index} to speed up subsequent pattern matching queries. As a matter of fact, a very  successful line of research dating back to the invention of suffix trees~\cite{weiner1973linear} and culminating with the discovery of compressed data structures~\cite{navarro:cup16} showed that it is possible to represent $S$ in compact space while speeding up matching queries.

The \emph{indexed pattern matching} problem can be generalized to collections of strings: in this case the pattern must be found as a substring in a string collection $\mathcal S$. 
A natural approach to solve this problem is to concatenate all strings in $\mathcal S$ into one string $S$ and re-use the well-optimized techniques for classical pattern matching. Even though
there are many successful examples of indexes following this strategy
\cite{gagie:latin14, gagie:soda18, makinen:jcb10}, this approach suffers from high space consumption during index construction (the input is very large), and does not scale to a more general (and interesting) scenario: the case where $\mathcal S$ contains an \emph{infinite} number of strings. 
A solution, addressing both issues, is to represent a (potentially infinite) collection of strings using a finite-state automaton with set of states $Q$ and transition function $\delta$.
In this new scenario, the goal of pattern matching is to locate walks in the automaton (seen as a labeled graph) spelling the query pattern $P$.
In bioinformatics, for example, genomic collections are represented using pangenome graphs: labeled graphs encoding nucleotide variations within the collection~\cite{computational2018computational,siren:ieee14}. Matching a DNA sequence over a pangenome graph allows one to discover the genetic variation in a population~\cite{siren2021pangenomics}.

Unfortunately, both the off-line and on-line pattern matching problems are hard to solve on labeled graphs: Equi et~al.~\cite{equi:sofsem21,equi2023complexity} showed that, conditioned on OVH~\cite{williams2005new}, it is not possible to design a polynomial-time algorithm to index a labeled graph such that pattern matching queries can be answered in $O(|P| + |\delta|^{\alpha}|P|^{\beta})$ time, for any constants $\alpha < 1$ or $\beta < 1$.

A successful paradigm to cope with this hardness is to study sub-classes of graphs where the problem is easier. Along these lines, Gagie et al.~\cite{gagie2017wheeler} introduced the class of \textit{Wheeler automata}: 
labeled graphs admitting a total order of their states (a \emph{Wheeler order}) which respects the underlying alphabet's order and it is propagated through pairs of equally-labeled edges. Wheeler orders generalize prefix sorting (the machinery standing at the core of the most successful string indexes) to labeled graphs and as a consequence, an index over a Wheeler automaton supports pattern matching queries in near-optimal $O(|P| \log |\Sigma|)$ time. 

 Wheeler automata, however, have two important limitations. First, 
the classes of 
Wheeler automata and 
Wheeler languages (accepted by such automata) are quite restricted. For example, Wheeler automata cannot contain proper cycles labeled with a unary string, and moreover, Wheeler languages are star-free~\cite{alanko:iac21}. 
Second, several natural problems related to Wheeler graphs are computationally hard. For example, the simple fact of deciding if an automaton is Wheeler is NP-complete, even when the automaton is acyclic~\cite{GibneyT19}.

Related to the first issue, the work~\cite{CotumaccioP21} extended state-ordering to arbitrary automata by using the concept of \emph{co-lex partial order}. By switching from total to partial orders, the authors showed that (i) every automaton can be (partially) sorted, and that (ii) the efficiency of pattern matching on the automaton depends on the \emph{width} (maximum size of an antichain) of such partial order. Wheeler automata are the particular case in which there exists a total co-lex order (i.e., of width one), thus enabling near-optimal time pattern matching queries.

A way to circumvent the latter limitation (hardness of computing a Wheeler order) was proposed by Alanko et al.~\cite{alanko:iac21} by switching to \emph{total preorders}: the authors showed a $O(|\delta|\cdot |Q|^2)$-time \emph{partition refinement} algorithm, which outputs a totally-ordered partition of $Q$ such that (1) states in the same part recognize the same language, and (2) the order of the partition is consistent with \emph{any} Wheeler order of $Q$. In other words, this ordered partition is a preorder of the states as useful for indexing as Wheeler orders. 
Recently, a similar solution was proposed by Chao et~al.~\cite{chao2022wgt} to speed up the computation of Wheeler orders in practice; after a first polynomial-time partition refinement step, their tool runs an exponential-time solver to assign a Wheeler ordering within classes of equivalent states. 

These strategies --- partial orders and total preorders --- were finally merged by Cotumaccio~\cite{Cotumaccio:dcc22}. As in \cite{CotumaccioP21}, the efficiency of pattern matching depends on the width of such a partial preorder. The author described a polynomial-time algorithm computing a \emph{colex partial preorder} of smallest width over an arbitrary nondeterministic finite-state automaton (NFA) in $O(|\delta|^2 + |Q|^{5/2})$ time. Later, this running time was improved by Kim et al. \cite{KimOP23} in the particular case of DFAs with two algorithms running in time $O(|Q|^2\log|Q|)$ and $O(|\delta|\cdot |Q|)$, respectively, and one algorithm running in near-optimal time 
 $O(|\delta|\log|Q|)$ on acyclic DFAs. 
Within the same running times, all the above-mentioned algorithms compute also a chain partition of minimum size $p$ of the partial preorder (by Dilworth's theorem \cite{dilworth1987decomposition}, $p$ is equal to the order's width); such a chain partition is needed to build the index described in \cite{Cotumaccio:dcc22}. The index can be built in linear time given as input such a chain partition, uses $(\lceil \log|\Sigma| \rceil + \lceil \log p \rceil + 2) \cdot (1 + o(1))$ bits per edge, and answers pattern matching queries on the regular language accepted by the automaton in $O(p^2\log(p \cdot |\Sigma|))$ 
time per pattern's character. 

\subparagraph*{Our contributions.} 
We consider two cases of colex orders: (1)~total preorders on NFAs, and (2)~partial orders of minimum width on DFAs. We provide algorithms running in time $O(|\delta|\log|Q|)$ for both cases, improving the state-of-the-art cubic and quadratic algorithms to near-optimal time.

Our solution to (1) is obtained by extending a classic algorithm for the relational coarsest partition refinement  problem of Paige and Tarjan~\cite{PaigeT87} to work with \emph{ordered} partitions. 
Our algorithm starts from the ordered partition corresponding to the states' incoming labels and, similarly to~\cite{PaigeT87}, iteratively refines this partition by enforcing \emph{forward-stability}: for any two parts $B$ (the ``splitter'') and $D$ (the ``split'') the image $\delta_a(B)$ of $B$ through the transition function (for any alphabet's character $a$) must either contain $D$ or be disjoint with $D$. If this condition is not satisfied, $D$ is split into three parts: states reached only from $B$, 
states reached only from $Q \setminus B$, and states reached from both. In addition to Paige and Tarjan's algorithm, we show how to sort these three parts consistently with the current partition's order. In fact, we show that our algorithm finds a total preorder on a class of automata being strictly larger than the Wheeler automata. We dub this class \emph{quasi-Wheeler}, thereby extending the class of automata that can be indexed to support queries in near-optimal time.
\begin{restatable}{theorem}{wheelerPreorder}
\label{theorem: wheeler preorder}
    For an NFA $\AAA = (Q, \delta, s)$, we can compute a total preorder $\preceq$ in $O(|\delta| \log |Q|)$ time such that $\preceq$ corresponds to a Wheeler preorder if and only if $\AAA$ is quasi-Wheeler.
\end{restatable}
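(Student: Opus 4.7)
The plan is to extend Paige and Tarjan's relational coarsest partition refinement algorithm~\cite{PaigeT87} so that it operates on \emph{ordered} partitions, maintaining throughout the invariant that the current ordered partition is a coarsening of every Wheeler preorder of $\AAA$. I would initialize with the ordered partition that groups states by the common label of their incoming transitions and orders the classes by the alphabet order (with states having no incoming edges placed first). This initial ordered partition satisfies the invariant because the Wheeler axioms force states with smaller incoming labels to precede those with larger ones.

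The main loop repeatedly picks a splitter-character pair $(B, a)$ with $B$ a current class and, for every class $D$ that intersects $\delta_a(B)$ but is not contained in it, splits $D$ into the three parts described in the outline: states of $D$ reached through an $a$-transition only from $B$, only from $Q \setminus B$, and from both. The ingredient new with respect to Paige and Tarjan is to decide the internal order of these three sub-parts of $D$. This order is dictated by the Wheeler axiom (a larger source on character $a$ forces a larger-or-equal target) together with the current position of $B$ in the ordered partition: once this position is known, the relative order of states reached only from $B$, only from the complement, and from both is uniquely forced among all ordered refinements that are still Wheeler preorders. I would prove by induction on the sequence of splits that this choice preserves the coarsening invariant and that the algorithm terminates in a forward-stable ordered partition. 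I would then define the class of \emph{quasi-Wheeler} NFAs as exactly those for which this fixed-point partition is itself a Wheeler preorder, which gives the ``if and only if'' of the theorem by definition; a small example would witness the strict containment Wheeler $\subsetneq$ quasi-Wheeler.

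For the $O(|\delta|\log|Q|)$ time bound, I would adopt the Hopcroft-style ``process-the-smaller-half'' amortization used in~\cite{PaigeT87}: whenever a class is split, enqueue all new sub-classes except the largest one as future splitters. Each state then belongs to at most $O(\log |Q|)$ queued splitters over the whole execution, and processing a splitter $B'$ costs $O(\sum_{q \in B'} \deg(q))$, yielding the claimed bound. The main obstacle I anticipate is implementing each ordered three-way split in time proportional only to the number of $a$-transitions from $B$ into $D$, rather than to $|D|$, while simultaneously maintaining the total order on classes under splits. I would address this with a doubly-linked list of classes together with state-to-class pointers, and a marking trick on the $a$-edges outgoing from $B$ that identifies the three sub-parts and splices them into the global chain in a single pass over the relevant transitions.
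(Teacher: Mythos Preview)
Your overall plan matches the paper's approach, but there is a concrete gap in the splitter strategy that the paper singles out as its main technical contribution. You propose the classic Hopcroft amortization (``enqueue all new sub-classes except the largest one''), and you claim that once the position of $B$ in the current ordered partition is known, the Wheeler axiom uniquely forces the order of the three sub-parts of $D$. This is not true in general. If $B$ is an arbitrary block somewhere in the middle of the ordered partition, then the states of $D$ not reached from $B$ (your ``only from $Q\setminus B$'' part) may have predecessors in blocks both before \emph{and} after $B$; the Wheeler axiom then gives no consistent way to place this part relative to the ``only from $B$'' part. The paper resolves this by keeping the Paige--Tarjan coarse partition $\XXX$ and, crucially, choosing $B$ as the smaller of the \emph{first} and the \emph{last} $\PPP$-block inside the current compound block $S\in\XXX$. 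With this choice, $S\setminus B$ lies entirely on one side of $B$, so the three-way split $D_{12},D_{11},D_2$ (with respect to $B$ and $S\setminus B$, not $Q\setminus B$) can be ordered by simply propagating the order of $B$ and $S\setminus B$ forward. This first/last choice still guarantees $|B|\le |S|/2$, so the $O(|\delta|\log|Q|)$ bound survives. Your queue-based scheme, as stated, guarantees neither the ordering nor an efficient three-way split (iterating over edges from $Q\setminus B$ is too expensive without the $\XXX$ structure).

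A second, smaller issue: you propose to \emph{define} quasi-Wheeler NFAs as those inputs on which the algorithm's output happens to be a Wheeler preorder. In the paper, quasi-Wheeler is defined independently (the quotient by the coarsest forward-stable partition is Wheeler), so the ``if and only if'' in the theorem is a claim to be proved, not a definition. The nontrivial direction is showing that on every quasi-Wheeler input the algorithm's ordered partition agrees with every Wheeler order of the quotient automaton; this is an induction over refinement steps that again relies on $B$ being first or last in $S$.
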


Our solution to (2) relies on a further modification of the partition refinement algorithm. More specifically, we show that within the same time complexity, we can prune the automaton's transition function so that the resulting graph is a quasi-forest satisfying the following property: every state $u$ has only one incoming walk, whose label is co-lexicographically smallest (the \emph{infimum}) among the labels of all walks ending at $u$ in the original DFA. Symmetrically, we compute a pruned automaton encoding the co-lexicographically largest strings (the \emph{suprema}).  
We obtain our second result by plugging in the $O(|Q|\log |Q|)$-time algorithms of Kim et al.~\cite{KimOP23}, which sort the infimum and suprema strings (suffix-doubling)\footnote{More precisely, it is a special case of the suffix doubling algorithm \cite{KimOP23}, which runs in the claimed time.} and then compute a minimum chain partition of the smallest-width co-lex order (interval graph coloring). Our result assumes that the DFA is input-consistent (all in-going transitions to a state are labeled with the same letter). This assumption is a common simplification in automata theory and it is not a limitation of our approach as every DFA can be easily transformed into an equivalent input-consistent DFA.

\begin{restatable}{theorem}{wheelerPoset}\label{theorem: wheeler poset}
For an input-consistent DFA $\AAA=(Q,\delta,s)$, we can compute a minimum chain partition of the smallest-width co-lex order in $O(|\delta|\log|Q|)$ time.
\end{restatable}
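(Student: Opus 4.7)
The plan is to reduce the problem to two invocations of the suffix-doubling and interval-coloring machinery of Kim et al.~\cite{KimOP23}, after extracting from $\AAA$ two pruned sub-automata that encode, respectively, the co-lexicographically smallest (infimum) and largest (supremum) incoming string of every state.

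First I would extend the partition refinement underlying \cref{theorem: wheeler preorder} so that, while refining the ordered partition of $Q$, it also maintains a sub-relation $\delta_{\mathrm{inf}} \subseteq \delta$ with the property that, at termination, $(Q, \delta_{\mathrm{inf}})$ is a quasi-forest whose unique walk ending at each state $u$ spells the infimum, in co-lex order, of the strings read along walks of $\AAA$ ending at $u$. The key observation is that once the current refinement totally orders the in-neighbours of $u$, only the incoming transition coming from the smallest in-neighbour class is needed: by input-consistency all in-transitions of $u$ share a label, and by determinism of $\AAA$ each predecessor contributes at most one of them, so the smallest predecessor pins down a unique in-edge. Keeping instead the edge from the largest in-neighbour class produces $\delta_{\mathrm{sup}}$. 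Both sub-relations can be updated on-line at no asymptotic overhead because the Paige--Tarjan refinement inspects every transition only $O(\log |Q|)$ times amortised.

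Next I would feed the quasi-forests $(Q, \delta_{\mathrm{inf}})$ and $(Q, \delta_{\mathrm{sup}})$ into the suffix-doubling routine of Kim et al.~\cite{KimOP23}, obtaining in $O(|Q|\log|Q|)$ time the co-lex ranks $\ell(u), r(u)$ of the infimum and supremum strings at every state $u$. By the characterisation of the smallest-width co-lex partial order on DFAs, two states are comparable if and only if their intervals $[\ell(u), r(u)]$ are nested, and the minimum chain partition of this order coincides with an optimal coloring of the resulting interval graph, which is computable greedily in $O(|Q|\log|Q|)$ time. Since the refinement stage dominates, the overall cost is $O(|\delta|\log|Q|)$.

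The main obstacle will be establishing the correctness of the on-line pruning performed inside the refinement. One needs to maintain the invariant that, at every step, the in-edge currently retained at each state $u$ lies on a walk whose label is no larger, in the current refinement of the preorder, than the label of any other walk ending at $u$ that uses only edges of $\AAA$. I would argue by induction on the refinement steps, exploiting the fact that the ordered partition produced by the extension of \cref{theorem: wheeler preorder} never contradicts the true co-lex preorder, and that the splits triggered by forward-stability are precisely the events that resolve ties between previously indistinguishable predecessors; a symmetric argument handles $\delta_{\mathrm{sup}}$.
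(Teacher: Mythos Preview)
Your proposal follows essentially the same approach as the paper: augment the ordered partition refinement with an on-line pruning step that, at each three-way split, discards the in-edges to $D_{11}$ coming from the larger (in the ordered partition) of $B$ and $S\setminus B$, obtaining quasi-forests that encode the infimum and supremum strings, and then feed these into Kim et al.'s suffix doubling and interval-coloring routines. One small slip to fix: comparability in $\prec_{\AAA}$ corresponds to the intervals $[\ell(u),r(u)]$ being \emph{separated} (i.e., $r(u)\le \ell(v)$ or $r(v)\le \ell(u)$), not nested---incomparable pairs are exactly the overlapping intervals, and it is this overlap (interval) graph on which the greedy coloring of~\cite{KimOP23} is run.
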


We have implemented our partition refinement algorithms. We compared the implementation of our algorithm from \Cref{theorem: wheeler preorder} with a heuristic implementation from \texttt{WGT}~\cite{chao2022wgt} (their first polynomial time step) which, similarly to our algorithm, computes an ordered partition being consistent with any Wheeler order (although not the most refined --- unlike the output of our algorithm). On random Wheeler NFAs generated with a tool from the same toolbox (\texttt{WGT}), our implementation outperforms the heuristic from \texttt{WGT} by more than two orders of magnitude. We also experimentally show that our algorithm from \Cref{theorem: wheeler poset} can prune a pangenomic DFA with more than 50 million states in less than 6 minutes.

\section{Notation and Preliminaries}
For an integer $k\ge 1$, we let $[k]:=\{1, \ldots, k\}$. Given a set $U$, a partition $\PPP$ of $U$ is a set of pairwise disjoint non-empty sets $\{U_1, \ldots, U_k\}$ whose union is $U$. We call $U_1, \ldots, U_k$ the \emph{parts} of $\PPP$. If we, in addition, have a (total) order of the parts, we say that it is an \emph{ordered partition} and denote it with $\langle U_1, \ldots, U_k\rangle$. For two (ordered) partitions $\PPP$ and $\PPP'$ of $U$, we say that $\PPP'$ is a \emph{refinement} of $\PPP$ if every part of $\PPP'$ is contained in a part of $\PPP$, i.e., for every $U' \in \PPP'$ there is $U \in \PPP$ with $U'\subseteq U$. As a special case, a partition is a refinement of itself.

\subparagraph{Relations and Orders.}
A \emph{relation} $R$ over a set $U$ is a set of ordered pairs of elements from $U$, i.e., $R\subseteq U\times U$. We sometimes omit $U$ if it is clear from the context. For two elements $u,v$ from $U$, we usually write $uRv$ for $(u,v)\in R$. 

A \emph{strict partial order} over a set $U$ is a relation that satisfies irreflexivity, asymmetry, and transitivity. If, in addition, a strict partial order satisfies connectedness it is a \emph{strict total order}.
A \emph{total preorder} over a set $U$ is a relation that satisfies transitivity, reflexivity, and connectedness (i.e., for all two distinct $u, v\in U$, $u$ is in relation with $v$ or $v$ is in relation with $u$).
An \emph{equivalence relation} over a set $U$ is a relation that satisfies reflexivity, symmetry, and transitivity. For an equivalence relation $\sim$, we use $[u]_{\sim}$ to denote the equivalence class of an element $u\in U$ with respect to $\sim$, i.e., $[u]_{\sim} :=\{v\in U: u\sim v\}$. We denote with $U/_\sim$ the partition of $U$ consisting of all equivalence classes $[u]_{\sim}$ for $u\in U$. In this paper, we denote strict total orders with the symbols $\prec$ and $<$, total preorders with the symbols $\preceq$ and $\le$, and equivalence relations with the symbols $\sim$ and $\approx$.

A total preorder $\preceq$ over $U$ induces an equivalence relation $\sim$ 
over $U$: For $u,v\in U$, define $u\sim v$ if and only if $u\preceq v$ and $v\preceq u$. 
Throughout the paper, $\sim$ will always refer to the equivalence induced by $\preceq$ (the order $\preceq$ will always be unambiguously defined).
A total preorder $\preceq$, in addition, yields a strict total order $\prec$  
on the elements of $U/_\sim$ as $[u]_{\sim} \prec [v]_{\sim}$ if and only if $u\preceq v$ and not $v\preceq u$.
Throughout the paper $\prec$ will refer to the strict order induced by $\preceq$ (always unambiguously defined).
A total preorder $\preceq$ over a set $U$ can thus be represented by a unique ordered partition $\langle U_1,\ldots, U_k \rangle$, where the parts $U_i$ represent the equivalence classes with respect to $\sim$ and their ordering represents the strict total order $\prec$. 

\subparagraph{Non-Deterministic Finite Automata (NFAs).}
Let $\Sigma$ denote a fixed finite and non-empty \emph{alphabet} of \emph{letters}. We assume that there is a strict total order $<$ on the alphabet $\Sigma$. 
\begin{definition}[NFA and DFA]
    A \emph{non-deterministic finite automaton (NFA)} over the alphabet $\Sigma$ is an ordered triple $\AAA=(Q, \delta, s)$, where $Q$ is the set of \emph{states}, $\delta:Q\times \Sigma \rightarrow 2^Q$ is the \emph{transition function}, and $s\in Q$ is the \emph{source state}. 

    A \emph{deterministic finite automaton (DFA)} over the alphabet $\Sigma$ is an NFA over $\Sigma$ such that $|\delta(v,a)|\le 1$ for all $a\in \Sigma$ and $v\in V$.
\end{definition}
We note that the standard definition of NFAs includes also a set of final states. As we are not concerned with the accepting languages of automata in this work, we omit the final states from the definition. In what follows we consider the alphabet $\Sigma$ to be fixed and thus frequently refer to NFAs without specifying the alphabet. Given an NFA $\AAA = (Q, \delta, s)$, for a state $u\in Q$ and a letter $a\in \Sigma$, we use the shortcut $\delta_a(u)$ for $\delta(u,a)$, similarly $\delta^{-1}_a(u) = \{v\ :\ \delta(v,a)=u\}$. For a set $S \subseteq Q$, we let $\delta_{a}(S) := \bigcup_{u\in S} \delta_{a}(u)$ and $\delta^{-1}_{a}(S) := \bigcup_{u\in S} \delta^{-1}_{a}(u)$. 

The set of finite strings over $\Sigma$, denoted  by $\Sigma^*$, is the set of finite sequences of letters from $\Sigma$. 
We extend the transition function from letters to finite strings in the common way, i.e, for $\alpha\in \Sigma^*$ and $v\in Q$, we define $\delta_\alpha(v)$ recursively as follows. If $\alpha=\eps$, we let $\delta_\alpha(v)=\{v\}$. Otherwise, if $\alpha = a \alpha'$ for some $a\in \Sigma$ and $\alpha' \in \Sigma^*$, we let $\delta_\alpha(v) = \bigcup_{w\in \delta_a(v)} \delta_{\alpha'}(w)$. 

\begin{definition}[Strings Reaching a State] 
    Given an NFA $\AAA=(Q, \delta, s)$, for a state $v\in Q$, the \emph{set of strings reaching $v$} is defined as 
    $
        S_v = \{\alpha \in \Sigma^*: v\in \delta_\alpha(s)\}
    $.
\end{definition}

Equivalently, $S_v$ is the regular language recognized by state $v$.
For simplicity, in this work we assume that NFAs satisfy the following two properties: (1)~The source state has no in-going transitions, i.e., $S_s=\{\epsilon\}$ and every other state is reachable from the source state, i.e., $|S_v| \geq 1$ for all $v\in Q\setminus \{s\}$.
(2)~The automaton is \emph{input-consistent}, i.e., for every state $v\in Q\setminus \{s\}$ it holds that $\delta_a^{-1}(v)$ is non-empty for exactly one letter $a\in \Sigma$. 
It is easy to see that any automaton can be transformed into an equivalent one (in the sense of the recognized language) satisfying assumptions (1) and (2). Condition (1) yields that $|Q|\le |\delta| + 1$. Condition (2) comes at the cost of increasing the number of states and transitions by a factor of at most $|\Sigma|$. We also assume that a given NFA contains at least one transition for every letter.
We denote by $\lambda(v)$ the unique label of the in-going transitions of a state $v\in Q\setminus \{s\}$, while $\lambda(s):=\epsilon$. We also extend this notation to sets of states, that is for a set of states $S$, we let $\lambda(S) := \{\lambda(v) : v \in S\}$ and if $\lambda(S) = \{a\}$, we write $\lambda(S) = a$.

\subparagraph{Forward-Stable Partitions.}
Alanko et al.\ consider \emph{forward-stable partitions}~\cite[Section 4.2]{alanko:iac21}. 
\begin{definition}[Forward-Stability]
    Given an NFA $\AAA=(Q, \delta, s)$ and two sets of states $S,T\subseteq Q$, we say that $S$ is \emph{forward-stable} with respect to $T$, if, for all $a\in\Sigma$,
    $
        S\subseteq\delta_a(T)
    $ or 
    $
        S\cap \delta_a(T) = \emptyset
    $ holds.
    A partition $\PPP$ of $\AAA$'s states is \emph{forward-stable for $\AAA$}, if, for any two parts $S, T \in \PPP$, it holds that $S$ is forward-stable with respect to $T$.
\end{definition}
A direct consequence of forward-stability is given in the following lemma, i.e., all states in the same part of a forward-stable partition are reached by the exact same set of strings.
\begin{lemma}\label{lem: fs partition strings}
    Let $\PPP$ be a forward-stable partition for an NFA $\AAA=(Q, \delta, s)$ and assume that $u,v\in P$ for some part $P\in\PPP$. Then, $S_u=S_v$.
\end{lemma}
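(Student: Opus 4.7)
The plan is to proceed by induction on the length of $\alpha \in \Sigma^*$, showing that $\alpha \in S_u \iff \alpha \in S_v$ for any two states $u,v$ in the same part $P \in \PPP$. By symmetry, it suffices to show one direction of each biconditional.

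For the base case $\alpha = \epsilon$, I first observe that forward-stability, combined with standing assumption~(1) that $s$ has no in-going transitions, forces the source state to be a singleton part. Indeed, suppose for contradiction that some $v \neq s$ lies in the same part $P$ as $s$. By assumption~(2), $v$ has at least one in-going transition, so there exist $a \in \Sigma$ and a state $w$ with $v \in \delta_a(w)$; letting $T$ be the part containing $w$, we have $P \cap \delta_a(T) \neq \emptyset$, so forward-stability yields $P \subseteq \delta_a(T)$, which in particular gives $s \in \delta_a(T)$, contradicting~(1). Hence if $u,v \in P$ then either $u = v = s$ (trivial), or neither equals $s$, in which case $\epsilon \notin S_u$ and $\epsilon \notin S_v$.

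For the inductive step, consider $\alpha = \beta a$ with $|\beta| = n-1$, and suppose $\alpha \in S_u$. Then there is a state $w$ with $w \in \delta_\beta(s)$ and $u \in \delta_a(w)$. Letting $T \in \PPP$ be the part containing $w$, we have $u \in P \cap \delta_a(T)$, so forward-stability of $P$ with respect to $T$ yields $P \subseteq \delta_a(T)$; in particular $v \in \delta_a(T)$, so there exists $w' \in T$ with $v \in \delta_a(w')$. Now applying the inductive hypothesis to $w, w' \in T$ gives $S_w = S_{w'}$, and since $\beta \in S_w$ we conclude $\beta \in S_{w'}$, i.e., $w' \in \delta_\beta(s)$. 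Combined with $v \in \delta_a(w')$, this yields $v \in \delta_\alpha(s)$, so $\alpha \in S_v$. The reverse implication is symmetric.

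The only subtle point is the base case: one might be tempted to handle it by treating $\epsilon$ as a length-zero string without further argument, but the possibility that $s$ shares a part with some other state must be explicitly ruled out using both the forward-stability hypothesis and the standing assumption that $s$ has no in-going transitions. Once this is observed, the inductive step is a direct application of forward-stability to transfer witnesses of membership in $S_u$ across to $S_v$.
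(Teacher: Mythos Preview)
Your proof is correct and follows the natural induction-on-string-length argument that the paper alludes to (the paper itself does not spell out a proof, merely citing \cite[Lemma~4.7]{alanko:iac21} and remarking that it follows easily from the definition of forward-stability). Your handling of the base case---explicitly showing that forward-stability together with assumption~(1) forces $\{s\}$ to be a singleton part---is a detail worth making explicit, and your inductive step is exactly the intended use of forward-stability.
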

This property can be proven easily using the definition of forward-stability, see, e.g., \cite[Lemma~4.7]{alanko:iac21}.  Furthermore, there is a straightforward relationship between forward-stability and bisimulation, see, e.g., the work of Kanellakis and Smolka~\cite{KanellakisS90} or Chapter~7.3 of the book by Katoen and Baier~\cite{BaierK08}. The \emph{coarsest} forward-stable partition for an NFA $\AAA$, i.e., the forward-stable partition with fewest parts, is identical to the partition consisting of the equivalence classes with respect to the bisimilarity relation (the unique largest bisimulation) on $\AAA^{-1}$, the automaton obtained from $\AAA$ by reversing all its transitions. We include a proof of this fact in Appendix~\ref{app: bisimilarity}. This also directly yields that there is a unique forward-stable partition. We note that a reverse statement of Lemma~\ref{lem: fs partition strings} may not necessarily hold even for the coarsest forward-stable partition. More precisely, states in different parts of the coarsest forward-stable partition may have the same set of strings reaching them, see the left automaton in Figure~\ref{fig: example NFAs}.
In what follows, for an NFA $\AAA=(Q, \delta, s)$ and an equivalence relation $\sim$ on $Q$, we write $\AAA/_\sim$ for the quotient NFA (of $\AAA$ with respect to $\sim$) obtained by collapsing the equivalence classes into single states, see~\cite[Definition 8]{alanko:iac21} for a formal definition.

\subparagraph*{Wheeler NFAs and Quasi-Wheeler NFAs.}
Wheeler NFAs~\cite{gagie2017wheeler} are a special class of NFAs that can be stored compactly and indexed efficiently as they can be endowed with a specific type of strict total order, the so-called \emph{Wheeler order}.
\begin{definition}[Wheeler NFA and Wheeler order]
    Let $\AAA =(Q, \delta, s)$ be an NFA. We say that $\AAA$ is a \emph{Wheeler NFA}, if there exists a \emph{Wheeler order} $\prec$ of $Q$. A Wheeler order $\prec$ of $Q$ is a strict total order on $Q$ such that the source state precedes all other states, i.e., $s\prec v$ for all $v\in Q\setminus \{s\}$, and, for any pair $v \in \delta_a(u)$ and $v' \in \delta_{a'}(u')$:
    \begin{enumerate}[(1)]
        \item\label{first wheeler axiom} If $a < a'$, then $v \prec v'$.
        \item\label{second wheeler axiom} If $a = a'$, $u \prec u'$, and $v\neq v'$, then $v \prec v'$.
    \end{enumerate}
\end{definition}

Clearly, not every NFA is a Wheeler NFA and, even worse, recognizing if a given NFA is Wheeler is NP-complete (for alphabet size at least 2) as shown by Gibney and Thankachan~\cite{GibneyT19}. We now introduce the following problem, called \textsc{OrderWheeler}: Given an arbitrary NFA \(\AAA=(Q, \delta, s)\)
as input, the task is to compute a strict total order $\prec$ of $Q$ with the property that $\prec$ is a Wheeler order if $\AAA$ is Wheeler.
As a result of the NP-completeness of recognizing Wheeler NFAs previously mentioned, also \textsc{OrderWheeler} is NP-hard. This follows as checking whether a given order is indeed a Wheeler order can be done in linear time~\cite[Lemma 3.1]{alanko2020regular}.
This motivates introducing the following relaxed version of Wheeler NFAs.
\begin{definition}[Wheeler preorders and quasi-Wheeler NFAs]\label{definition: wheeler pre order}
    Let $\AAA =(Q, \delta, s)$ be an NFA. A Wheeler preorder $\preceq$ on $Q$ is a total preorder on $Q$ such that:
    \begin{itemize}
        \item The partition $Q/_\sim$, where $\sim$ is the equivalence relation induced by $\preceq$, is equal to the coarsest forward-stable partition of $\AAA$.
        \item The quotient automaton $\AAA/_{\sim}$ is a Wheeler NFA with the strict total order $\prec$ induced by $\preceq$ on the equivalence classes with respect to $\sim$.
    \end{itemize}
We say that $\AAA$ is a \emph{quasi-Wheeler NFA}, if there exists a \emph{Wheeler preorder} $\preceq$ on $Q$.
\end{definition}
See the beginning of this section for the formal definition of $\sim$ and $\prec$.
From the point of view of indexing, quasi-Wheeler NFAs are as useful as Wheeler NFAs: note that the quotient automaton $\AAA/_{\sim}$ in the definition above generates the same language as the original NFA $\AAA$ due to the properties of the forward-stable partition, see~\Cref{lem: fs partition strings}. However, $\AAA/_{\sim}$ is a Wheeler NFA and can thus be stored compactly and indexed efficiently.
We next note that Corollary 4.11 in the paper by Alanko et al.~\cite{alanko:iac21} directly yields the following lemma. 
\begin{lemma}
    Let $\AAA =(Q, \delta, s)$ be a Wheeler NFA, then $\AAA$ is also a quasi-Wheeler NFA.
\end{lemma}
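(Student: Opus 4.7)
The plan is to promote the given Wheeler order $\prec$ on $Q$ to a Wheeler preorder $\preceq$ by grouping states according to the coarsest forward-stable partition $\PPP$ of $\AAA$. The key claim is that every part $P\in \PPP$ is a $\prec$-interval: if $u, w\in P$ and $u \prec v \prec w$ for some $v\in Q$, then $v\in P$. Given this, I can define $u\preceq v$ iff either $u$ and $v$ lie in the same part of $\PPP$ or the part of $u$ strictly precedes the part of $v$ in $\prec$. This yields a total preorder whose induced equivalence $\sim$ satisfies $Q/_\sim = \PPP$, and whose induced strict order $\prec'$ on $Q/_\sim$ inherits the Wheeler axioms from $\prec$ by choosing arbitrary representatives within each part (using the interval property together with the fact that all states in a part share a common incoming label, a consequence of input-consistency and forward-stability).

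To establish the interval claim, I would argue inductively along a partition-refinement process producing $\PPP$. I start from the partition of $Q$ by in-labels, which is a valid starting point since forward-stability together with input-consistency force each part of $\PPP$ to have a unique in-label, so $\PPP$ refines the in-label partition. This initial partition already consists of $\prec$-intervals by Wheeler axiom~(\ref{first wheeler axiom}). At a generic refinement step, a part $D$ whose states share in-label $a$ is split by a splitter $B$ with respect to the letter $a$ into up to three $\prec$-contiguous pieces: the states of $D$ that precede $D\cap \delta_a(B)$ in $\prec$, the middle $D\cap \delta_a(B)$, and the states of $D$ that follow it. The only non-trivial point is that the middle piece is itself a $\prec$-interval, which I would prove via Wheeler axiom~(\ref{second wheeler axiom}): for $v_1, v_2 \in D \cap \delta_a(B)$ with $v_1 \prec v_2$ and $v_3 \in D$ with $v_1 \prec v_3 \prec v_2$, the axiom (applied to chosen $a$-predecessors $u_1, u_2 \in B$ of $v_1, v_2$) forces every $a$-predecessor of $v_3$ to lie in $\prec$ strictly between $u_1$ and $u_2$; since $B$ is a $\prec$-interval by the inductive hypothesis, some $a$-predecessor of $v_3$ must fall in $B$, placing $v_3 \in \delta_a(B)$. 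Consequently each refined piece remains a $\prec$-interval, and the induction carries through to $\PPP$.

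The main obstacle, and the technical heart of the argument, is the $\prec$-interval claim---in particular showing that the refinement step preserves the interval property, which is where Wheeler axiom~(\ref{second wheeler axiom}) is used in a non-trivial way and where the argument diverges from a generic partition-refinement analysis. Once the interval claim is in hand, the construction of $\preceq$ and the verification of the two Wheeler axioms on the quotient $\AAA/_\sim$ are straightforward by picking representatives in each equivalence class and invoking the corresponding axiom of $\prec$ on $\AAA$.
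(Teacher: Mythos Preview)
The paper does not give a self-contained proof of this lemma; it simply invokes Corollary~4.11 of Alanko et al.\ and moves on. Your plan is therefore more ambitious, and its overall architecture---prove that every part of the coarsest forward-stable partition $\PPP$ is a $\prec$-interval, then verify the Wheeler axioms on the quotient by picking representatives---is sound and is essentially what underlies Alanko et al.'s result. There is, however, a genuine gap in your inductive argument for the interval claim.

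The issue is that the refinement process you describe, splitting a part $D$ by a single splitter $B$ into \emph{three} $\prec$-contiguous pieces, does not in general produce the coarsest forward-stable partition $\PPP$: the extra cut between the ``before'' and ``after'' pieces is not a forward-stability split and may separate states lying in the same part of $\PPP$. Conversely, the standard two-way split into $D\cap\delta_a(B)$ and $D\setminus\delta_a(B)$ does converge to $\PPP$, but then $D\setminus\delta_a(B)$ can consist of two disjoint $\prec$-intervals, so your interval invariant breaks at the next step. Under either reading, the induction as written does not establish that the parts of $\PPP$ are $\prec$-intervals.

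A clean fix is to run the induction along the Kanellakis--Smolka sequence $\PPP_0,\PPP_1,\dots$, where $\PPP_{k+1}$ refines $\PPP_k$ using \emph{all} parts of $\PPP_k$ as splitters simultaneously; this sequence does converge to $\PPP$. For $u,w$ in the same part of $\PPP_{k+1}$ and $u\prec v\prec w$ you must then show, for every $B\in\PPP_k$, both implications. Your argument already gives $u,w\in\delta_a(B)\Rightarrow v\in\delta_a(B)$. The missing half is $u,w\notin\delta_a(B)\Rightarrow v\notin\delta_a(B)$: take any $a$-predecessor $u'$ of $u$, let $B'\in\PPP_k$ be its part, use the $\PPP_{k+1}$-equivalence of $u$ and $w$ to obtain an $a$-predecessor $w'\in B'$ of $w$; if $v$ had an $a$-predecessor $v'\in B$, Wheeler axiom~(\ref{second wheeler axiom}) would force $u'\prec v'\prec w'$, contradicting that $B'$ is a $\prec$-interval (inductive hypothesis) with $v'\notin B'$. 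With both directions in hand the induction goes through, and the remainder of your outline is correct.
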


Furthermore, the set of quasi-Wheeler NFAs is strictly larger than the set of Wheeler NFAs as there exist non-Wheeler NFAs $\AAA$ for which the quotient automaton $\AAA/_{\sim}$ is Wheeler, e.g., the NFA and its corresponding quotient NFA in the center and on the right of Figure~\ref{fig: example NFAs}.

\begin{figure}[!htb]
\centering
\resizebox{\textwidth}{!}{
\begin{tikzpicture}[->,>=stealth', semithick, auto, scale=.7]
    \node[state, label=above:{}] (S) at (-2,0){$s$};
    \node[state, label=above:{}] (Q_12) at (0,0) {$1$};
    \node[state, label=above:{}] (Q_3) at (2,0)	{$2$};
    \node[state, label=above:{}] (Q_4) at (4,0) {$3$};
    \draw (S) edge node[below] {$a$} (Q_12);
    \draw (S) edge [bend left=40, above] node {$a$} (Q_3);
    \draw (Q_12) edge node[below] {$a$} (Q_3);
    \draw (Q_3) edge   node {$a$} (Q_4);
    \draw (S) edge [bend right=35, above] node[below] {$a$} (Q_4);
    \draw (Q_3) edge  [loop above] node {$a$} (Q_3);
    
    \begin{scope}[xshift=8cm]
        \node[state, label=above:{}] (S)    at (-2,0)      {$s$};
        \node[state, label=above:{}] (Q_1)    at (0,1)    {$1$};
        \node[state, label=above:{}] (Q_2)    at (0,-1)      {$2$};
        \node[state, label=above:{}] (Q_3)    at (3,1)           {$3$};
        \node[state, label=above:{}] (Q_4)    at (3,-1)          {$4$};
        \draw (Q_1) edge node {$b$} (Q_3);
        \draw (Q_2) edge node[below] {$b$} (Q_4);
        \draw (Q_1) edge node[pos=0.3] {$b$} (Q_4);
        \draw (Q_2) edge node[pos=0.3, below] {$b$} (Q_3);
        \draw (S) edge [bend left=55, above] node {$b$} (Q_3);
        \draw (S) edge [bend left, above] node[below] {$a$} (Q_1);
        \draw (S) edge [bend right, above] node {$a$} (Q_2);
       \begin{scope}[xshift=7cm]
            \node[state, label=above:{}] (S)    at (-2,0)      {$s$};
            \node[state, label=above:{}] (Q_12)    at (0,0)      {$1,2$};
            \node[state, label=above:{}] (Q_3)    at (2,0)           {$3$};
            \node[state, label=above:{}] (Q_4)    at (4,0)           {$4$};
            \draw (S) edge node[below] {$a$} (Q_12);
            \draw (S) edge [bend left=40, above] node {$b$} (Q_3);
            \draw (Q_12) edge node {$b$} (Q_3);
            \draw (Q_12) edge [bend right=40, above] node[below] {$b$} (Q_4);
       \end{scope} 
    \end{scope}
\end{tikzpicture}
}
\caption{Left: An NFA for which $\PPP=\langle\{s\}, \{1\}, \{2\}, \{3\}\rangle$ is the coarsest forward-stable partition, yet $S_2=S_3$. Center: A non-Wheeler NFA (the two states $3$ and $4$ cannot be ordered) that is quasi-Wheeler as witnessed by the Wheeler preorder $\preceq$ corresponding to the ordered partition $\langle\{s\}, \{1, 2\}, \{3\}, \{4\} \rangle$. Right: Quotient automaton $\mathcal A /_\sim$, where $\mathcal A$ is the automaton in the middle and $\sim$ is the equivalence relation induced by the Wheeler preorder $\preceq$. Notice that $\mathcal A /_\sim$ is Wheeler with the strict total order $\prec$ induced by $\preceq$ on the equivalence classes with respect to $\sim$.}
\label{fig: example NFAs}
\end{figure}
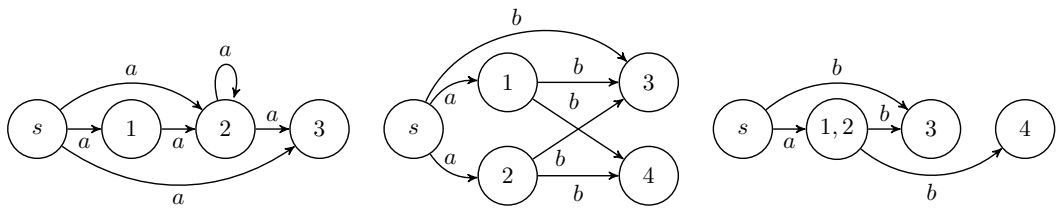

An important advantage of quasi-Wheeler NFAs over classical Wheeler NFAs is that the former can be recognized in polynomial time, which is implicit in the Forward Algorithm of Alanko et~al.~\cite{alanko:iac21}. Indeed, the Forward Algorithm receives a Wheeler NFA as input and, in $O(|\delta||Q|^2)$ time, outputs a Wheeler order of the quotient automaton $\AAA/_{\sim}$, where $\sim$ is the equivalence relation induced by the returned partition. However, this algorithm actually solves a slightly more general problem, since their output partition is guaranteed to correspond to a Wheeler order of $\AAA/_{\sim}$ whenever $\AAA$ is quasi-Wheeler rather than Wheeler. In other words Alanko et al.\ gave a polynomial time algorithm for the following problem that we call \textsc{PreOrderWheeler}: Given an arbitrary
NFA \(\AAA=(Q, \delta, s)\), the task is to compute a 
total preorder $\preceq$ of $Q$ with the property that $\preceq$ is a Wheeler preorder if $\AAA$ is quasi-Wheeler.
Their polynomial time algorithm for \textsc{PreOrderWheeler}, the \emph{Forward Algorithm}, yields a  polynomial-time recognition algorithm for quasi-Wheeler NFAs as follows: Given an arbitrary input NFA $\AAA$, run the Forward Algorithm to compute the Wheeler preorder $\preceq$ given by an ordering of the forward-stable partition. The output partition is guaranteed to be the coarsest forward-stable partition. Compute the quotient automaton with respect to this partition and check if it is Wheeler (in polynomial time by~\cite[Lemma 3.1]{alanko2020regular}) when endowed with the induced strict total order $\prec$. 

\section{Partition Refinement  for Wheeler Preorders of NFAs}
\label{sec: ordered partition refinement}
In this section we provide an algorithm that solves \textsc{PreOrderWheeler} in $O(|\delta| \log |Q|)$ time. Recall that the Forward Algorithm of Alanko et al.~\cite{alanko:iac21} has running time $O(|\delta|\cdot |Q|^2)$. We achieve the nearly-linear time complexity by using the partition refinement framework of Paige and Tarjan~\cite{PaigeT87}. It is in fact clear that this framework can be used to compute a forward-stable partition. Our notion of forward-stability corresponds to stability with respect to $|\Sigma|$ relations, defined as $E_a:=\{(u, v)\in Q^2: u\in \delta_a(v)\}$ for $a\in \Sigma$, in their terminology~\cite[Section 3]{PaigeT87}.
Our main contribution here is to extend the framework so as to compute an \emph{ordered} partition (and thus a preorder), while maintaining the nearly-linear running time.

We proceed with a description of the partition refinement algorithm by Paige and Tarjan.
The algorithm maintains two partitions $\PPP$ and $\XXX$, here $\PPP$ is an input partition to be refined and $\XXX$ is such that (1) $\PPP$ is a refinement of $\XXX$ and (2) every part of $\PPP$ is stable with respect to every part of $\XXX$. Initially, $\XXX$ is the partition with a single part containing all elements. Until $\PPP$ becomes stable, i.e., every part of $\PPP$ is stable with respect to every part of $\PPP$, which is witnessed by the fact that $\PPP=\XXX$, the algorithm iteratively chooses a \emph{compound part} $S$ from $\XXX$, i.e., a part form $\XXX$ that consists of multiple parts in $\PPP$. Then a ``splitter'' $B$ is chosen as one of the parts in $\PPP$ contained in $S$ and every part in $\PPP$ is refined using $B$ by doing a so-called ``three-way split'' that we detail in the next section. The idea behind the three-way split is essentially to make $\PPP$ stable with respect to $B$ and $S\setminus B$ at the same time. 
In fact, the choice of $B$ in the algorithm of Paige and Tarjan is crucial. It is fundamental to always choose a block $B$ such that $|B|\le |S|/2$. This is the essential property that yields the nearly linear running time using the observation that every element is contained in at most logarithmically many splitters. 
This property on the size of $B$ is achieved in Paige and Tarjan's implementation by inspecting the first two elements of the list of all parts from $\PPP$ contained in $S$ and then choosing $B$ to be the smaller one (in size). 

As it turns out, this choice of $B$ however interferes with maintaining the order of the parts such as to satisfy the properties of the Wheeler preorder. We solve this issue by choosing the smaller (in size) among the first and last block (in the sense of the ordered partition) contained in the first compound block (rather than the smaller out of the first two blocks, as in Paige and Tarjan's algorithm). This choice both guarantees that $|B|\le |S|/2$ and enables us to maintain a consistent ordering between the parts resulting from a split step.

\subparagraph*{Algorithm.}
We proceed with a description of our algorithm. A pseudo-code formulation can be found in Algorithm~\ref{alg: partition refinement}.
First, note that an input-consistent NFA has a natural partition of its states into the $|\Sigma| + 1 $ parts $\{Q_a\}_{a\in \Sigma \cup \{\eps\}}$, where $Q_a:= \{v\in Q: \lambda(v) = a\}$ for $a\in \Sigma \cup \{\eps\}$.  Property~\eqref{first wheeler axiom} of Wheeler orders already defines the ordering of these parts: any Wheeler preorder of the NFA's states has to satisfy that $u$ precedes $v$ if $u$'s in-coming letter is smaller than $v$'s. Hence, the ordering of the $|\Sigma| + 1$ parts has to be $Q_\eps, Q_{a_1}, \ldots Q_{a_k}$, where we assume that $\Sigma = \{a_1, \ldots, a_k\}$ with $a_1<\ldots<a_k$. Following this observation, the ordered partition $\PPP$ in the algorithm is initialized to $\PPP:=\langle Q_\eps, Q_{a_1}, \ldots Q_{a_k}\rangle$.

\begin{algorithm}[!ht]
\caption{Ordered Partition Refinement }\label{alg: partition refinement}
    \Input{NFA $\A=(Q,\delta, s)$}
    \Output{Ordered Partition $\PPP$ of $Q$ that corresponds to Wheeler preorder if and only if $\AAA$ is quasi-Wheeler}
  
    \medskip
    \tcp{* initialization * //}
    $\PPP := \langle Q_{\epsilon}, Q_{a_1}, ..., Q_{a_k} \rangle $,
    $\XXX := \langle Q\rangle$\; 

    \medskip
    \While{$\XXX\neq \PPP$
    }{
        \medskip
        \tcp{* get splitter $B$ * //}
        $S:=$ first block in $\XXX$ consisting of multiple blocks in $\PPP$\label{line: choose S}\\
        $B:=$ smaller of first and last block from $\PPP$ contained in $S$ \\\texttt{\hspace{23mm} // * variable $B$ contains this block even if split *//}\label{line: first or last}\;
        \medskip
        \tcp{* update $\XXX$ * //}
        \lIf{$B$ is first block}{replace $S$ in $\XXX$ with $B, S\setminus B$ \textbf{else} with $S\setminus B, B$}\label{line: update X}
    
        \medskip
        \tcp{* split blocks in $\PPP$ using $B$ * //}
        \For{$D\in\PPP$ }{
            \medskip
            \tcp{* split $D$ * //}

            $D_1:= D\cap \delta_a(B)$, 
            $D_2:= D\setminus D_1$, where $a= \lambda(D)$ \label{line: split D}\;
            $D_{11}:= D_1\cap \delta_a(S\setminus B)$,
            $D_{12}:= D_1\setminus D_{11}$\;
    
            \medskip
            \tcp{* update $\PPP$ * //}
            \lIf{$B$ first block}{
                replace $D$ in $\PPP$ with non-empty sets from $D_{12}, D_{11}, D_2$\\
            \textbf{else} with non-empty sets from $D_2, D_{11}, D_{12}$}

            \medskip
            \tcp{* continue \normalfont{\textbf{for}} \texttt{behind newly inserted blocks in $\PPP$ * //}}
        }    
    }
    \Return{$\PPP$}
\end{algorithm}

\begin{figure*}[ht!]
 	\centering
 		\includegraphics[width=0.4\textwidth, trim={2.0mm 1.3mm 2.0mm 2.0mm}, clip]{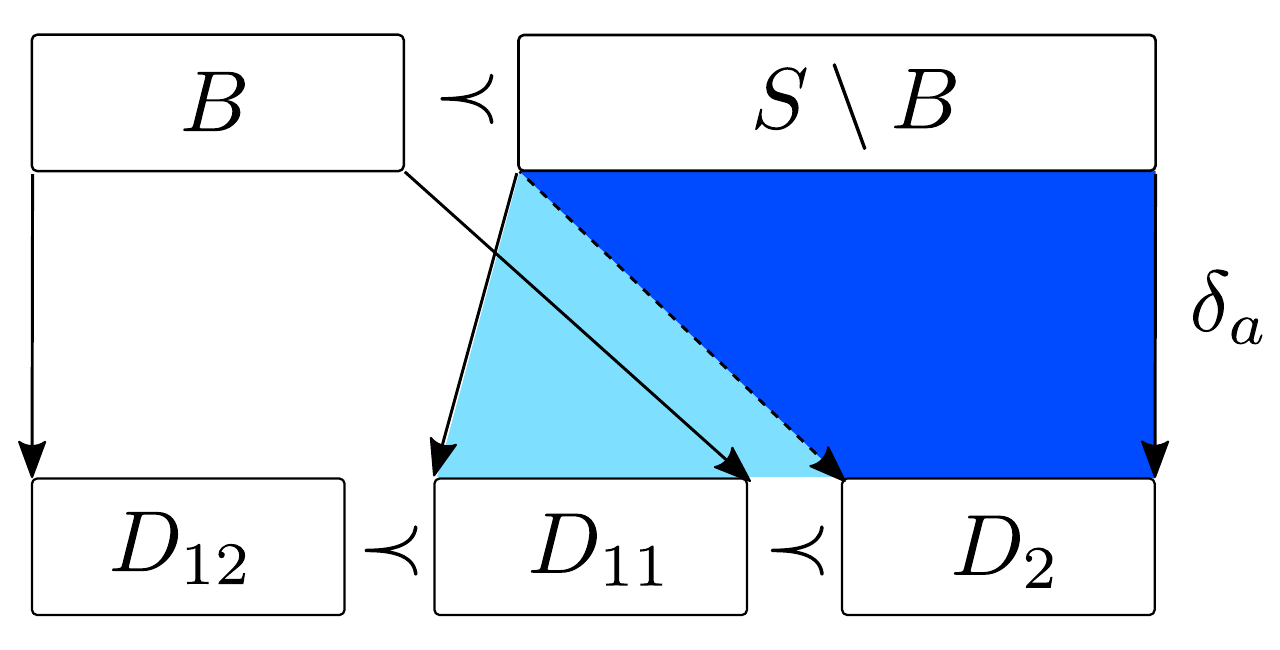}
        \hspace{5mm}
     	\includegraphics[width=0.4\textwidth, trim={2.0mm 1.3mm 2.0mm 2.0mm}, clip]{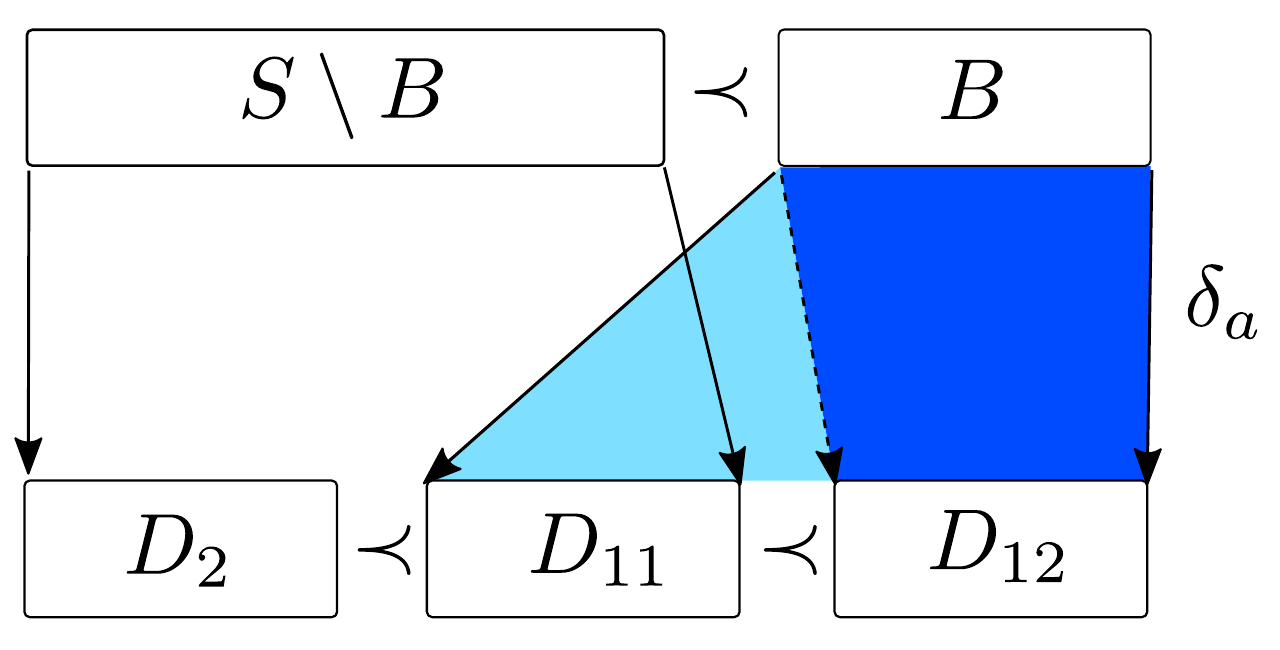}
    \caption{The two cases of our ``ordered'' three-way split of $D$ into $D_2$, $D_{11}$ and $D_{12}$ using splitter $B$ contained in the compound part $S$. Here, $D_{11}$ is the part of $D$ that is reached both by $B$ and $S\setminus B$, $D_{12}$ is the part of $D$ reached by $B$ but not by $S\setminus B$, and $D_2$ is the part of $D$ reached by $S\setminus B$ but not by $B$. The size of the splitter $B$ is at most half of the size of the compound block $S$ containing $B$. On the left, $B$ is the first part in $S$ (in the ordered partition), on the right it is the last part. On the left, the order of $B, S\setminus B$ gets propagated forward, the resulting order within $D$ is $D_{12}, D_{11}, D_{2}$. On the right, the order of $S\setminus B, B$ gets propagated forward, the resulting order within $D$ is $D_{2}, D_{11}, D_{12}$. The two shades of blue indicate the pruning of edges explained in \Cref{sec: DFA co-lex order}.}
    \label{fig:casesRefinement}
    %\vspace{-4mm}
\end{figure*}

As in the original partition refinement  framework, the algorithm then keeps splitting the parts of the partition using so called \emph{splitters} which are themselves parts in the partition. By splitting, we mean the operation of making all parts of the partition forward-stable with respect to the splitter. As in Paige and Tarjan's algorithm, to maintain the set of splitters, the algorithm also maintains another ordered partition $\XXX$ such that $\PPP$ is a refinement of $\XXX$. Initially $\XXX$ has a unique part that is equal to $Q$ and the algorithm will maintain the invariant that $\PPP$ is forward-stable with respect to each part from $\XXX$. As before, we call the parts of $\XXX$ that contain more than a single part from $\XXX$ \emph{compound parts}. The algorithm then iteratively takes the first (in the order of the ordered partition) compound block $S$ from $\PPP$ and defines the splitter $B$ as the smaller (in size) of the first and last (in the order of the ordered partition) block in $\PPP$ contained in $S$.

Once the splitter $B$ is defined, the algorithm aims to split each part $D$ of the partition $\PPP$ (including $B$ itself) using $B$. As we aim at making the partition forward-stable with respect to $B$ we would want to split $D$ into $D_1:= D\cap \delta_a(B)$ and $D_2:= D\setminus D_1$. To implement the three-way split, we however want to further refine $D$ by $S\setminus B$. Recall that $S$ was a part in $\XXX$ that contained $B$ and that $\PPP$ is already forward-stable with respect to $S$ by the invariant. The forward-stability of $D$ with respect to $S$ yields that all states in $D_2$ are reached by $S\setminus B$ and thus $D$ is decomposed into only three parts 
$D_{11}:= D_1\cap \delta_a(S\setminus B)$,
$D_{12}:= D_1\setminus D_{11}$, and 
$D_2:= D\setminus D_1$
when splitting both with $B$ and $S\setminus B$.
This three-way split can be implemented with work proportional to the number of edges leaving $B$ (rather than $B$ and $S\setminus B$). Together with the choice of $B$ being the smaller out of the first and the last block contained in $S$, this is the main property that allows to prove the nearly linear running time bound of the algorithm. The order in which the sets $D_{11}$, $D_{12}$, and $D_2$ are put in $\PPP$ replacing $D$ depends on whether $B$ was the first or the last block in $S$ and is chosen so as to satisfy property~\ref{second wheeler axiom} of Wheeler orders. Intuitively, the order of $B$ and $S\setminus B$ is propagated forward to $D_{11}$, $D_{12}$, and $D_2$, see Figure~\ref{fig:casesRefinement} for an illustration of this ``ordered'' three-way split.

The algorithm keeps splitting the parts in $\PPP$ in this way until $\XXX=\PPP$, i.e., until there are no compound blocks left. Stopping at this point is correct by the invariant that the partition $\PPP$ is forward-stable with respect to each part from $\XXX=\PPP$, and, thus, forward-stable for $\AAA$.

\subparagraph*{Analysis.}
The following lemma states that the claimed invariant, that every part of $\PPP$ is forward-stable with respect to every part of $\XXX$, holds. This follows immediately from the same property of the framework by Paige and Tarjan (the proof of this lemma can be found in Appendix~\ref{sec: partition refinement proof}).
\begin{restatable}{lemma}{invariant}\label{lem: invariant}
    At the beginning of every iteration of the while loop in Algorithm~\ref{alg: partition refinement}, it holds that every part of the ordered partition $\PPP$ is forward-stable with respect to every part of $\XXX$.
\end{restatable}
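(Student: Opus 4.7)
The plan is to prove the invariant by induction on the number of iterations of the while loop. The base case is essentially the initialization step; the inductive step requires carefully tracking what the three-way split does to forward-stability.

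For the base case, at initialization $\XXX=\langle Q\rangle$ and $\PPP=\langle Q_\eps, Q_{a_1}, \ldots, Q_{a_k}\rangle$. I would check, for each $Q_b \in \PPP$ and each letter $a\in \Sigma$, that either $Q_b\subseteq \delta_a(Q)$ or $Q_b\cap \delta_a(Q)=\emptyset$. This is immediate from input-consistency: states in $Q_b$ have incoming letter $b$ only, so $Q_b\cap\delta_a(Q)=\emptyset$ when $a\ne b$, while $Q_b\subseteq\delta_a(Q)$ when $a=b$ (using that every non-source state has at least one in-transition). For $Q_\eps=\{s\}$, the source has no in-transitions, so $\{s\}\cap\delta_a(Q)=\emptyset$ for all $a$.

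For the inductive step, I would exploit a simple but crucial property maintained throughout the algorithm: every part $D\in\PPP$ is input-consistent, since initialization produces input-consistent parts and splitting only produces subsets. Thus for any set $Y$, checking forward-stability of $D$ with respect to $Y$ reduces to checking with the single letter $a=\lambda(D)$. Consider one iteration transforming $(\PPP,\XXX)$ into $(\PPP',\XXX')$ with splitter $B$ sitting inside the compound part $S$. I would split the verification into three cases depending on which part of $\XXX'$ we test against. (i) For any part $X\in\XXX'$ that was not touched, i.e. $X\in\XXX\setminus\{S\}$: every new $\PPP'$-part is a subset of some old $D\in\PPP$, and $D$ is forward-stable w.r.t.\ $X$ by the inductive hypothesis, so the subset inherits forward-stability. (ii) For $X=B$: by the definitions $D_1=D\cap\delta_a(B)$ and $D_2=D\setminus D_1$, we have $D_{11}\cup D_{12}\subseteq\delta_a(B)$ and $D_2\cap\delta_a(B)=\emptyset$, hence each is forward-stable w.r.t.\ $B$.

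The main obstacle — and the reason the three-way split is needed in the first place — is case (iii): showing each of $D_{11},D_{12},D_2$ is forward-stable with respect to $X=S\setminus B$. For $D_{11}$ and $D_{12}$ this is by definition: $D_{11}=D_1\cap\delta_a(S\setminus B)\subseteq\delta_a(S\setminus B)$ and $D_{12}=D_1\setminus D_{11}$ is disjoint from $\delta_a(S\setminus B)$. The delicate sub-case is $D_2$: here I would invoke the inductive hypothesis that $D$ was forward-stable with respect to the \emph{old} part $S$. If $D\cap\delta_a(S)=\emptyset$, then $D_1=\emptyset$, $D_2=D$, and $D_2\cap\delta_a(S\setminus B)\subseteq D\cap\delta_a(S)=\emptyset$. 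Otherwise $D\subseteq\delta_a(S)=\delta_a(B)\cup\delta_a(S\setminus B)$, and then $D_2=D\setminus\delta_a(B)\subseteq\delta_a(S\setminus B)$. Either way $D_2$ is forward-stable w.r.t.\ $S\setminus B$, completing the inductive step and hence the proof. This last step is exactly where our algorithm mirrors the argument of Paige and Tarjan: the three-way split relies on prior stability with respect to the enclosing compound block $S$ to avoid scanning $S\setminus B$ explicitly, and the invariant preserves precisely the property needed to make this possible in subsequent iterations.
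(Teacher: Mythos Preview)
Your proof is correct and follows the same inductive strategy as the paper's own proof. The paper's argument is considerably terser---it simply asserts that stability with respect to $B$ and $S\setminus B$ ``is clear by the refinement step'' and that stability with respect to the remaining parts of $\XXX$ follows because subsets of forward-stable sets are forward-stable---whereas you spell out the case analysis explicitly, including the key sub-case showing $D_2\subseteq\delta_a(S\setminus B)$ via the inductive hypothesis on $S$, which is precisely the point the paper leaves implicit.
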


We will now prove the core technical result of this section: The ordered partition refinement algorithm in fact computes a partition that corresponds to a Wheeler preorder.
\begin{lemma}
\label{lem: consistent wheeler}
    Assume that Algorithm~\ref{alg: partition refinement} is called on a quasi-Wheeler NFA $\A = (Q,s,\delta)$. Then, at any step of the algorithm, the partition $\PPP=\langle Q_1, \dots, Q_k\rangle$ agrees with every Wheeler order $\prec$ of the quotient automaton $\AAA':=\AAA/_{\PPP'}$, where $\PPP'$ denotes the coarsest forward-stable partition for $\AAA$. That is, if $i<j, u\in P_i, v\in P_j$ then $u\prec v$ for every Wheeler order $\prec$ of $\AAA'$.
\end{lemma}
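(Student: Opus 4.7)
I argue by induction on the number of iterations of the while loop in Algorithm~\ref{alg: partition refinement}, maintaining the auxiliary invariant that $\PPP'$ refines $\PPP$ at every step, so that each part $P\in\PPP$ is a union of parts of $\PPP'$. This invariant is preserved by the three-way split because $\PPP'$ is itself forward-stable: if $B$ and $D$ are unions of parts of $\PPP'$, then so are $D\cap\delta_a(B)$ and $D\cap\delta_a(S\setminus B)$, and hence also $D_{11}$, $D_{12}$, and $D_2$. In particular, whenever $u$ and $v$ lie in different parts of $\PPP$, the classes $[u]_{\PPP'}$ and $[v]_{\PPP'}$ are genuinely distinct, which later permits invoking Wheeler axiom~\ref{second wheeler axiom} in the quotient $\AAA'$.

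\textbf{Base case.} The initial partition is $\langle Q_\eps, Q_{a_1},\ldots,Q_{a_k}\rangle$. Input-consistency together with the forward-stability of $\PPP'$ yields that every part of $\PPP'$ has a single incoming label, so $\PPP'$ refines the initial partition. For $u\in Q_{a_i}$ and $v\in Q_{a_j}$ with $i<j$, either $u=s$ (and $[s]_{\PPP'}$ must precede every other class in $\AAA'$ by the source requirement on Wheeler orders), or $[u]_{\PPP'}$ and $[v]_{\PPP'}$ have incoming labels $a_i<a_j$, and Wheeler axiom~\ref{first wheeler axiom} forces $[u]_{\PPP'}\prec[v]_{\PPP'}$ in every Wheeler order of $\AAA'$.

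\textbf{Inductive step.} Assume the claim at the start of an iteration and let $B\subseteq S$ be the chosen splitter. The only new pairwise orderings introduced are those among the sub-blocks $D_{12},D_{11},D_2$ (or $D_2,D_{11},D_{12}$) of each split block $D$; relative orderings between sub-blocks of distinct $D\neq D'$ are inherited from $D$ versus $D'$ and handled by induction. I consider the case where $B$ is the first block of $S$ in $\PPP$; the other case is symmetric after swapping the roles of $B$ and $S\setminus B$. For $u,v$ in the respective new sub-blocks with $[u]_{\PPP'}\neq[v]_{\PPP'}$ (guaranteed by the auxiliary invariant), the goal is $[u]_{\PPP'}\prec[v]_{\PPP'}$. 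Setting $a=\lambda(D)$, the state $u$ has an $a$-predecessor $u'\in B$ by construction of $D_{12},D_{11}$, while $v$ has an $a$-predecessor $v'\in S\setminus B$: immediate if $v\in D_{11}$, and obtained from Lemma~\ref{lem: invariant} (forward-stability of $D$ with respect to $S$) combined with $v\notin\delta_a(B)$ if $v\in D_2$. Since $B$ precedes every block of $\PPP$ contained in $S\setminus B$, the inductive hypothesis gives $[u']_{\PPP'}\prec[v']_{\PPP'}$ in every Wheeler order of $\AAA'$. Wheeler axiom~\ref{second wheeler axiom} applied in $\AAA'$ then yields the required $[u]_{\PPP'}\prec[v]_{\PPP'}$.

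\textbf{Main obstacle.} The delicate point is exhibiting the predecessor $v'\in S\setminus B$ when $v\in D_2$: this is exactly where the Paige--Tarjan-style invariant of Lemma~\ref{lem: invariant} enters the proof, and it also clarifies why choosing $B$ as either the first or the last block of $S$ (so that $S\setminus B$ lies entirely on one side of $B$ in $\PPP$) is essential, since a splitter taken from the interior of $S$ would break the monotone propagation of the order from $(B, S\setminus B)$ down to $(D_{12},D_{11},D_2)$ via axiom~\ref{second wheeler axiom}.
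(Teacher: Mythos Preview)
Your proof is correct and follows essentially the same inductive strategy as the paper: exhibit a predecessor of $u$ in $B$ and a predecessor of $v$ in $S\setminus B$ (the latter via Lemma~\ref{lem: invariant} when $v\in D_2$), then push the inductive ordering forward via Wheeler axiom~\ref{second wheeler axiom}. The one noteworthy difference is that you make explicit, and prove, the auxiliary invariant that $\PPP'$ refines $\PPP$ throughout, which is needed to guarantee $[u]_{\PPP'}\neq[v]_{\PPP'}$ before invoking axiom~\ref{second wheeler axiom} in the quotient $\AAA'$; the paper's proof applies that axiom without explicitly checking this distinctness hypothesis, so your version is slightly more rigorous on this point.
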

\begin{proof}
    The initial partition $\PPP=\langle Q_\epsilon, Q_{a_1}, \dots, Q_{a_k}\rangle$ agrees with any Wheeler order $\prec$ of $\AAA'$. We will show by induction over the number of steps of the while loop that this is the case in any step of the algorithm. Assume that this is true before some intermediate iteration and let us denote with $\PPP$ the ordered partition at that point. Now, let $S$ be the compound block and let $B$ be the splitter chosen in that iteration. Let us call $\PPP'$ the ordered partition after refining $\PPP$ with $B$ and $S\setminus B$, i.e., at the end of that iteration. We will prove that $\PPP'$ agrees with any Wheeler order $\prec$. As in the algorithm, for $D\in\PPP$, assume that $a=\lambda(D)$ and let $D_{1}=D\cap \delta_a(B)$, $D_2=D\setminus D_1$, and $D_{11}= D_1\cap \delta_a(S\setminus B)$ and $D_{12}= D_1\setminus D_{11}$. Now let $u,v\in D$. If $u$ and $v$ are contained in the same set out of the three sets $D_{12}, D_{11}, D_2$, nothing is to be shown. Hence, assume that $u$ and $v$ are in two different sets out of the three sets. There are three cases: (1) $u\in D_{11}$ and $v\in D_2$, (2) $u\in D_{12}$ and $v\in D_2$, and (3) $u\in D_{12}$ and $v\in D_{11}$. Notice first that in all three cases, there exists $x\in B$ with $u\in \delta_a(x)$. Now, recall that $D$ is stable with respect to $S$ due to Lemma~\ref{lem: invariant} and thus there exists $y\in S\setminus B$ with $v\in \delta_a(y)$ in all three cases as well. Let us now first assume that $B$ is the first block in $S$ in line~\ref{line: first or last}. Then, as $B$ precedes all blocks in $S\setminus B$ in $\PPP$ and $\PPP$ is consistent with any Wheeler order of $\AAA'$, we have that $x\prec y$ for any Wheeler order of $\AAA'$. Then, property~\ref{second wheeler axiom} of Wheeler orders implies that $u\prec v$ for any Wheeler order of $\AAA'$. In summary, as the algorithm replaces $D$ by $D_{12}, D_{11}, D_2$ in this case, we deduce that $\PPP'$ again agrees with each Wheeler order of $\AAA'$. 
    If, instead $B$ is the last block in $S$ in line~\ref{line: first or last}, then $B$ succeeds all blocks in $S\setminus B$ in $\PPP$ and thus we can analogously deduce $v\prec u$ from property~\ref{second wheeler axiom} of Wheeler orders and as the algorithm replaces $D$ by $D_{2}, D_{11}, D_{12}$, we deduce that $\PPP'$ agrees with every Wheeler order of $\AAA'$ also in this case.
\end{proof}

It remains to argue that the ordered partition refinement algorithm in fact runs in the claimed running time bound of $O(|\delta| \log |Q|)$. 
The two main ingredients are as follows: (1) We always choose the smaller out of the first and the last part contained in $S$ as $B$ and thus $|B|\le |S|/2$. As a result every state is in at most logarithmically many splitters. (2) We can implement the algorithm in such a way that the work done in a refinement step with splitter $B$ is proportional to the size of $B$ and the number of out-going transitions from $B$. This can be argued completely analogously as done by Paige and Tarjan. We prove our algorithm running time and describe the main data structure details in Appendix~\ref{sec: eff implementation}. In summary, we obtain the following theorem.
\wheelerPreorder*

\section{Partition Refinement  for Width-Optimal Co-lex Orders of DFAs}\label{sec: DFA co-lex order}

Cotumaccio and Prezza~\cite{CotumaccioP21} propose a way of sidestepping the fact that there may not be a Wheeler order for a given NFA. They show how to index general NFAs using \emph{partial orders}. For this purpose, they define co-lex orders for NFAs as follows.

\begin{definition}[Co-lex Order] \label{def: colex order}
    Let $\AAA = (Q, \delta, s)$ be an NFA.
	A \emph{co-lex order} for $\AAA$ is a strict partial order $\prec$ of $Q$ such that the source state precedes all other states, i.e., $s\prec v$ for all $v\in Q\setminus \{s\}$, and, for any pair $v\in \delta_a(u)$ and $v'\in \delta_{a'}(u')$: 
    \begin{enumerate}[(1)]
        \item\label{first colex axiom} If $a < a'$, then $v \prec v'$. 
        \item\label{second colex axiom} If $a = a'$, $v \prec v'$, and $u\neq u'$, then $u \prec u'$.
    \end{enumerate}
\end{definition}

The \emph{width} of a strict partial order $\prec$ is defined as the size of the largest antichain, i.e. set of pairwise $\emph{incomparable}$ states, where two distinct states $u,v\in Q$ are said to be incomparable if neither $u\prec v$ nor $v\prec u$ holds. Finding a smallest-width co-lex order of $\AAA$ has been of particular interest for constructing efficient indexes for pattern matching on automata~\cite{CotumaccioP21,Cotumaccio:dcc22,CotumaccioDGPP:arxiv22,KimOP23}. We provide more context on co-lex orders in Appendix~\ref{sec: app: DFA: motivation}.
For DFAs, the following order is known to be the smallest-width co-lex order~\cite{CotumaccioP21}.
\begin{definition}\label{def: prec_DFA} 
    Let $\AAA=(Q, \delta, s)$ be a DFA. The relation $\prec_{\AAA}$ over $Q$ is defined as follows. For $u, v\in Q$, 
	$u \prec_{\AAA} v$ holds if and only if $\alpha < \beta$ for all $\alpha \in S_u$ and $\beta \in S_v$.
\end{definition}
In the above definition, the alphabet order $<$ over $\Sigma$ is extended to the co-lexicographical order on strings.
For the purpose of the co-lex order $\prec_\AAA$, a state $v\in Q$ can thus be represented solely using upper and lower bounds on $S_v$. In particular, for a state $v\in Q$, let $\inf S_v$ and $\sup S_v$ be the greatest lower bound (infimum) and the least upper bound (supremum) of $S_v$, respectively, which are possibly left-infinite strings. A formal definition of these concepts can be found in Appendix~\ref{sec: app: DFA: infima suprema}.
As shown by Kim et al.~\cite{KimOP23}, the co-lex order $\prec_\AAA$ can be characterized using the co-lex order of $\IS_\AAA:=\{\inf S_v:v\in Q\}\cup\{\sup S_v:v\in Q\}$. 
\begin{lemma}[\text{\cite[Theorem 10]{KimOP23}}]\label{lem: colex characterization}
    Let $\AAA = (Q, \delta, s) $ be a DFA. Then, for any two distinct states $u,v\in Q$,
    $u \prec_{\AAA} v$ holds if and only if 
    $\sup{S_u}\le \inf{S_v}$.
\end{lemma}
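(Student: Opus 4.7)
The plan is to prove the two directions of the biconditional separately, relying on the universal properties of supremum and infimum in the completion of the co-lex order to (possibly) left-infinite strings. The determinism of $\AAA$ will enter only in the backward direction, where it is needed to upgrade non-strict inequalities to strict ones.

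For the forward direction ($\Rightarrow$), I would assume $u \prec_\AAA v$. By Definition~\ref{def: prec_DFA}, every $\alpha\in S_u$ and every $\beta\in S_v$ satisfy $\alpha<\beta$ in the co-lex order. Fix any $\beta\in S_v$: since $\beta$ exceeds every element of $S_u$, it is an upper bound of $S_u$ in the extended co-lex order, so the least-upper-bound property of the supremum gives $\sup S_u\le\beta$. As this holds for every $\beta\in S_v$, $\sup S_u$ is a lower bound of $S_v$, and the greatest-lower-bound property of the infimum then yields $\sup S_u\le\inf S_v$.

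For the backward direction ($\Leftarrow$), I would assume $\sup S_u\le\inf S_v$. For any $\alpha\in S_u$ and $\beta\in S_v$ the chain $\alpha\le\sup S_u\le\inf S_v\le\beta$ gives $\alpha\le\beta$. To upgrade to strict inequality, I would invoke the DFA hypothesis: since $\AAA$ is deterministic, a string $\gamma\in\Sigma^*$ uniquely determines a state $\delta_\gamma(s)$ (when defined), so distinct states have disjoint reaching sets, i.e.\ $S_u\cap S_v=\emptyset$ whenever $u\neq v$. Hence $\alpha\neq\beta$ and therefore $\alpha<\beta$, which is precisely the condition in Definition~\ref{def: prec_DFA} for $u\prec_\AAA v$.

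The main subtlety lies in the forward direction, where $\sup S_u$ and $\inf S_v$ may be left-infinite strings that are not themselves members of $S_u$ or $S_v$. The argument above is robust to this because it appeals only to the abstract universal properties of sup and inf, rather than to an explicit approximation of $\sup S_u$ by elements of $S_u$. What needs to be checked once and for all (presumably in the formal treatment of infima and suprema in the appendix the paper points to) is that the co-lex order extends to finite and left-infinite strings as a total order in which every nonempty bounded set admits a least upper bound and a greatest lower bound; once this completeness is established, the two directions above reduce to a single line each and no further technicalities are needed.
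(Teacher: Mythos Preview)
The paper does not give its own proof of this lemma; it is simply quoted from \cite[Theorem~10]{KimOP23} and used as a black box. So there is no in-paper argument to compare your proposal against.

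That said, your argument is correct. Both directions are exactly the standard order-theoretic manipulation you describe: in the forward direction one uses that each $\beta\in S_v$ is an upper bound of $S_u$ (so $\sup S_u\le\beta$) and then that $\sup S_u$ is a lower bound of $S_v$; in the backward direction the chain $\alpha\le\sup S_u\le\inf S_v\le\beta$ gives $\alpha\le\beta$, and determinism together with $u\neq v$ yields $S_u\cap S_v=\emptyset$, upgrading the inequality to strict. The only external ingredients you invoke---that $S_u,S_v\neq\emptyset$ and that $\sup$ and $\inf$ exist in $\Sigma^\bullet=\Sigma^*\cup\Sigma^\omega$ with the usual universal properties---are supplied by the paper's standing reachability assumption and by Definition~\ref{def: inf sup} in the appendix, respectively.
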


We note that, for the purpose of indexing, we also need a minimum chain partition of $\prec_\AAA$, i.e., a minimum-size partition of $Q$ such that the states in each part are totally ordered under $\prec_\AAA$. As shown by Kim et al.~\cite{KimOP23}, such a chain partition can be computed in linear time via a greedy algorithm for the interval graph coloring problem, provided that the set $\IS_\AAA$ (implicitly defining an interval in co-lex order for each state) has been computed and sorted.

In this section, we give an algorithm for co-lex sorting the set $\IS_\AAA$ that runs in nearly-linear time $O(|\delta| \log |Q|)$ on any input-consistent DFA. This is a significant improvement with respect to the best previously-known algorithms that run in $O(|\delta|\cdot|Q|)$ and $O(|Q|^2\log|Q|)$ time~\cite{KimOP23}.

\subparagraph{Outline.} Given an input DFA $\AAA=(Q,\delta,s)$, we first compute a pruned automaton $\AAA^{\inf}=(Q,\delta^{\inf},s)$ that \emph{encodes} $\{\inf S_v:v\in Q\}$ in a sense that walking back starting at a state $v\in Q$ on the pruned transition $\delta^{\inf}$ yields its infimum string $\inf S_v$. Similarly, we compute a pruned automaton $\AAA^{\sup}=(Q,\delta^{\sup},s)$ that encodes the suprema strings. As we will see, computing these sets takes $O(|\delta|\log|Q|)$ time each. On these two pruned automata, we compute the co-lex order of $\IS_\AAA$ in $O(|Q|\log|Q|)$ time in a similar way as done by Kim et al.~\cite{KimOP23}, from which the smallest-width co-lex order of $\AAA$ (with a minimum chain partition) can be computed in linear time.

\subparagraph{Pruning Algorithm.}
The algorithm is identical to Algorithm~\ref{alg: partition refinement} with the only difference that we insert an additional pruning step, Algorithm~\ref{alg: pruning}, before 
Line~\ref{line: split D} of the algorithm. See Figure~\ref{fig:casesRefinement} for an illustration: If a state in $D$ is reached from both $B$ and $S\setminus B$, we only keep the edges from the smaller (in the sense of the ordered partition) of the two. This corresponds to changing $\delta_a$ such that the blue portion of the figure shrinks to the dark blue one.

{\LinesNumberedHidden
\begin{algorithm}[!ht]
\caption{Pruning Step for Algorithm~\ref{alg: partition refinement} inserted before  Line~\ref{line: split D}}\label{alg: pruning}
    \medskip
    \tcp{* prune $\AAA$ * //}
    \lIf{$B$ is the first block}{
    delete transitions from $S\setminus B$ to $D_{11}$
    \textbf{else}
    from $B$ to $D_{11}$.
    }
\end{algorithm}}
We inductively show that Algorithm~\ref{alg: pruning} outputs a pruned automaton encoding the set of infima $I:=\{\inf S_v: v\in Q\}$. For an integer $i\ge 1$, let $\delta^{(i)}$ be the pruned transition at the end of the $i$-th iteration of the while loop. Let us analogously denote with $\XXX^{(i)}$ and $\PPP^{(i)}$ the state of the ordered partitions $\XXX$ and $\PPP$ in the algorithm at the end of the $i$-th iteration. For convenience, we also define $\delta^{(0)}=\delta$, $\XXX^{(0)}=\langle Q\rangle$, and $\PPP^{(0)}=\langle Q_\eps, Q_{a_1}, \ldots, Q_{a_k}\rangle$. 
Lemma~\ref{lem: invariant} and the definition of the pruning step yield the following invariant on forward-stability.
\begin{observation}\label{obs: invariant}
    At the end of every iteration $i$ of the while loop in Algorithm~\ref{alg: partition refinement} when run together with the pruning step, it holds that every part of the ordered partition $\PPP^{(i)}$ is forward-stable with respect to every part in $\XXX^{(i)}$ in the automaton $\AAA^{(i)}=(Q, \delta^{(i)}, s)$.
\end{observation}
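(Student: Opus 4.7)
The plan is induction on the iteration index $i$, mirroring the proof of \Cref{lem: invariant} while additionally tracking the effect of the pruning step on the transition function. The base case $i=0$ is immediate: by input-consistency, $\delta_c(Q) = Q_c$ for each $c \in \Sigma$, so each block of $\PPP^{(0)}$ is forward-stable with respect to the single block $Q$ of $\XXX^{(0)}$ in $\AAA^{(0)} = \AAA$.

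For the inductive step, fix iteration $i$ with compound block $S$ and splitter $B$. Every block of $\XXX^{(i)}$ is either a block $X \neq S$ inherited from $\XXX^{(i-1)}$, or one of $B$ and $S \setminus B$. The inherited-block case is straightforward: every pruned edge has its source in $B \cup (S \setminus B) = S$, and the inductive refinement property gives $X \cap S = \emptyset$, so $\delta_c^{(i)}(X) = \delta_c^{(i-1)}(X)$ for all $c \in \Sigma$. Forward-stability of each refined part $P' \in \PPP^{(i)}$ with respect to $X$ then follows directly from the inductive forward-stability of the parent block $P \in \PPP^{(i-1)}$ with $P' \subseteq P$.

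The main case concerns $X' \in \{B, S \setminus B\}$. For each $D \in \PPP^{(i-1)}$ with $a = \lambda(D)$, I would verify that in $\delta^{(i)}$ each of $D_2$, $D_{11}$, $D_{12}$ is either contained in or disjoint from both $\delta_a^{(i)}(B)$ and $\delta_a^{(i)}(S \setminus B)$; for letters $c \neq a$, input-consistency makes the conclusion immediate. The sets $D_2$ and $D_{12}$ are untouched by the pruning: by construction $D_{12} \subseteq \delta_a^{(i-1)}(B)$ is disjoint from $\delta_a^{(i-1)}(S \setminus B)$, while $D_2$ is disjoint from $\delta_a^{(i-1)}(B)$ and, by the inductive forward-stability of $D$ with respect to $S$, is either contained in $\delta_a^{(i-1)}(S \setminus B)$ or disjoint from it. The set $D_{11}$ is precisely where the pruning intervenes: in the ``$B$ first'' case, deleting the edges from $S \setminus B$ to $D_{11}$ yields $D_{11} \cap \delta_a^{(i)}(S \setminus B) = \emptyset$ while preserving $D_{11} \subseteq \delta_a^{(i)}(B)$, and the ``$B$ last'' case is symmetric.

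The main subtlety to keep in mind is that pruning cannot restore a forward-stability relation lost earlier, so one must argue that no pruned edge contributes to any image $\delta_c(X)$ for an inherited block $X \neq S$. This holds because every pruned edge has its source inside $S$, whereas every inherited block is disjoint from $S$. Once the symmetric subcases for $D_{11}$, $D_{12}$, $D_2$ versus $B$ and $S \setminus B$ are written out, the observation follows.
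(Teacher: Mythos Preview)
Your proof is correct and follows the same inductive scheme as the paper's \Cref{lem: invariant}, extended to track the pruned transition function. In fact the paper does not give a standalone proof of this observation at all: it simply states that it follows from \Cref{lem: invariant} together with the definition of the pruning step. Your write-up supplies exactly the details that this one-line justification leaves implicit --- in particular, the two observations that (i) pruned edges always originate inside $S$, so inherited blocks $X\neq S$ of $\XXX^{(i)}$ have unchanged images, and (ii) for the new blocks $B$ and $S\setminus B$, the pruning of edges into $D_{11}$ is precisely what restores forward-stability of $D_{11}$ with respect to whichever of $B$, $S\setminus B$ previously violated it. So there is no genuine methodological difference; your argument is simply a fleshed-out version of what the paper asserts without proof.
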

Intuitively speaking, the pruning step removes transitions that do not originate from the co-lexicographically smallest part in the sorted partition. Hence, we obtain the following lemma (the proof is deferred to Appendix~\ref{sec: app: DFA: proofs}).
\begin{restatable}{lemma}{LemAAprime}\label{lem: AAprime}
    Let $v\in Q$ be a state with $a=\lambda(v)$ and let $A,A'\in\XXX^{(i)}$ for some $i\ge 0$ be such that $v\in\delta^{(i)}_a(A)\cap\delta_a(A')$. Then either (i) $A=A'$ or (ii) $A$ precedes $A'$ in $\XXX^{(i)}$.
\end{restatable}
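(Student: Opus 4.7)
The plan is to strengthen the statement and prove it by strong induction on $i$. Specifically, for each state $v$ with $a = \lambda(v)$, let $\mathcal{R}_v^{(i)} := \{X \in \XXX^{(i)} : X \cap \delta_a^{-1}(v) \ne \emptyset\}$ denote the set of blocks of $\XXX^{(i)}$ containing some \emph{original} incoming neighbour of $v$. I would prove the invariant $(\star)$: there is a unique block $X_v^{(i)} \in \XXX^{(i)}$ with $v \in \delta_a^{(i)}(X_v^{(i)})$, and $X_v^{(i)}$ equals the $\XXX^{(i)}$-minimum of $\mathcal{R}_v^{(i)}$. The lemma is immediate from $(\star)$: if $v \in \delta_a^{(i)}(A) \cap \delta_a(A')$ then $A = X_v^{(i)}$ by uniqueness, $A' \in \mathcal{R}_v^{(i)}$, and so $A$ either equals or precedes $A'$ in $\XXX^{(i)}$.

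The base case $i=0$ is immediate since $\XXX^{(0)} = \langle Q\rangle$ and $\delta^{(0)} = \delta$. For the inductive step, let iteration $i$ split some $S \in \XXX^{(i-1)}$ into $\langle F, G\rangle$---where $F, G$ denote $B, S\setminus B$ or $S\setminus B, B$ according to whether $B$ is the first or last block from $\PPP$ contained in $S$---so that the pruning step deletes the edges from $G$ to $D_{11}$. In the easy case $X_v^{(i-1)} \ne S$, the block $X_v^{(i-1)}$ is untouched by the iteration (pruning only affects edges leaving $F$ or $G$, both disjoint from $X_v^{(i-1)}$), and the new blocks $F, G \subseteq S$ inherit no surviving edges to $v$ since $(\star)$ at $i-1$ gives $v \notin \delta_a^{(i-1)}(S)$. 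Every other block in $\mathcal{R}_v^{(i-1)} \setminus \{S\}$ succeeds $S$ by the IH and hence both $F$ and $G$ in $\XXX^{(i)}$, yielding $X_v^{(i)} = X_v^{(i-1)} = \min \mathcal{R}_v^{(i)}$.

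The interesting case is $X_v^{(i-1)} = S$, in which $v \in \delta_a^{(i-1)}(F) \cup \delta_a^{(i-1)}(G)$. Analysing the pruning rule, the unique block with a surviving edge to $v$ after iteration $i$ is $F$ if $v \in \delta_a^{(i-1)}(F)$ (any edge from $G$ to $v$ is pruned because $v$ then lies in $D_{11}$) and $G$ otherwise ($v \notin D_{11}$, so $G$'s edges to $v$ survive). Every block in $\mathcal{R}_v^{(i-1)} \setminus \{S\}$ succeeds $S$ and thus both $F$ and $G$ in $\XXX^{(i)}$, so $\min \mathcal{R}_v^{(i)} \in \{F, G\}$. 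The subcase $v \in \delta_a^{(i-1)}(F)$ is easy: a surviving edge is an original edge, so $F \cap \delta_a^{-1}(v) \ne \emptyset$, hence $\min \mathcal{R}_v^{(i)} = F = X_v^{(i)}$.

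The main obstacle is the subcase $v \notin \delta_a^{(i-1)}(F)$, where one must show $F \cap \delta_a^{-1}(v) = \emptyset$ to conclude $\min \mathcal{R}_v^{(i)} = G = X_v^{(i)}$. I would argue by contradiction: assume some $u \in F$ carries an original edge $u \to v$ that is absent from $\delta^{(i-1)}$. Then it was deleted in some earlier iteration $j < i$ splitting $S_j \to \langle F_j, G_j\rangle$. The pruning rule forces $u \in G_j$, and $v \in D_{11}^{(j)} \subseteq \delta_a^{(j-1)}(F_j)$ yields a witness $u_j \in F_j$ whose edge $u_j \to v$ is not pruned at iteration $j$. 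Applying the strong IH at $j$ gives $F_j = X_v^{(j)}$, in particular $u_j \in F_j \cap \delta_a^{-1}(v)$. Since $u \in F \cap G_j \ne \emptyset$ and $\XXX^{(i-1)}$ refines $\XXX^{(j)}$, we obtain $F \subseteq G_j$ and thus $S \subseteq G_j$. Consequently, the block of $\XXX^{(i-1)}$ containing $u_j$ is a descendant of $F_j$, precedes $S$ in $\XXX^{(i-1)}$, and lies in $\mathcal{R}_v^{(i-1)}$---contradicting $(\star)$ at $i-1$, which asserts $\min \mathcal{R}_v^{(i-1)} = S$.
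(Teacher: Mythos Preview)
Your proof is correct and your strengthening to invariant $(\star)$ is both natural and, as it turns out, needed for a fully rigorous argument. The paper proceeds differently: it inducts directly on the lemma statement via the contrapositive, showing that $A > A'$ in $\XXX^{(i+1)}$ forces $v \notin \delta^{(i+1)}_a(A) \cap \delta_a(A')$. When at least one of $A, A'$ lies outside the split block $S$, this reduces to the induction hypothesis for blocks $P \supseteq A$, $P' \supseteq A'$ in $\XXX^{(i)}$. When both lie in $S$ (so $\{A, A'\} = \{B, S\setminus B\}$ with $A' = F$ the earlier half and $A = G$), the paper asserts that forward-stability of $\PPP^{(i)}$ with respect to $S$ yields $v \in \delta^{(i)}_a(A) \Rightarrow v \in \delta^{(i)}_a(A')$, whence $v \in D_{11}$ and the edges from $A$ to $v$ get pruned.

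That implication, however, does \emph{not} follow from forward-stability alone: if, say, $B$ is first and $v \in D_2$ (reached in $\delta^{(i)}$ from $S\setminus B$ but not from $B$), then $v \in \delta^{(i)}_a(A)$ yet $v \notin \delta^{(i)}_a(A')$. The lemma remains true only because in such a configuration one also has $v \notin \delta_a(A')$ in the \emph{original} automaton---precisely the content of your ``main obstacle'' case. Your strengthened invariant $(\star)$ (the surviving predecessor block is the $\XXX$-minimum among \emph{all original} predecessor blocks) is exactly what closes this gap, and your strong-induction trace back to the earlier iteration $j$ where the offending edge was pruned is the right mechanism. So your route buys a genuinely complete proof where the paper's is under-justified.

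One small slip: in the easy case $X_v^{(i-1)} \ne S$, you write ``every other block in $\mathcal{R}_v^{(i-1)} \setminus \{S\}$ succeeds $S$ by the IH'', but $X_v^{(i-1)}$ itself belongs to that set and \emph{precedes} $S$ whenever $S \in \mathcal{R}_v^{(i-1)}$. What you actually need (and what the IH gives) is that every block of $\mathcal{R}_v^{(i-1)}$ other than $X_v^{(i-1)}$ succeeds $X_v^{(i-1)}$; since $F$ and $G$ inherit $S$'s position, they too succeed $X_v^{(i-1)}$, and hence $\min \mathcal{R}_v^{(i)} = X_v^{(i-1)}$ as you conclude. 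The fix is cosmetic.
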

Let $\PPP^*$ and $\delta^*$ be the ordered partition and the pruned transition, respectively, obtained at the end of the algorithm's execution. 
From Lemma~\ref{lem: AAprime}, we can 
see that for every state $v\in Q\setminus\{s\}$, there exists a unique part $A\in\PPP^{*}$ such that $v\in \delta_{\lambda(v)}^{*}(A)$. Combining this with Observation~\ref{obs: invariant}, we obtain that a unique string can be obtained by walking backwards through the pruned transitions $\delta^*$. For $u\in Q$, let $\alpha^*_u$ be the longest (or possibly left-infinite) string that can be obtained in this way starting from $u$. Since every transition comes from a co-lex smallest part, we can obtain the following lemma (the proof is included in Appendix~\ref{sec: app: DFA: proofs}).

\begin{restatable}{lemma}{LemInfimaEncoding}\label{lem: infima encoding}
    For every $u\in Q$, $\alpha^*_u=\inf S_u$.
\end{restatable}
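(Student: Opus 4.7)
My plan is to establish $\alpha^*_u = \inf S_u$ via two complementary statements: (A) $\alpha^*_u \le \beta$ in co-lex order for every $\beta \in S_u$ (so $\alpha^*_u$ is a lower bound of $S_u$), and (B) for every $k\ge 0$, some $\beta \in S_u$ has the same length-$k$ right-suffix as $\alpha^*_u$ (so no string strictly larger than $\alpha^*_u$ can be a lower bound of $S_u$). Together these force $\alpha^*_u$ to be the greatest lower bound of $S_u$.

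Claim (B) is a reachability argument. The length-$k$ backward walk from $u$ through pruned transitions is a sequence $u = u_0, u_{-1}, \ldots, u_{-k}$ with $u_{-i} \in (\delta^*)^{-1}_{\lambda(u_{-i+1})}(u_{-i+1})$. If it reaches $s$ in fewer than $k$ steps then $\alpha^*_u \in S_u$ directly; otherwise $u_{-k} \in Q$ is reachable from $s$ in $\AAA$ by assumption, and concatenating any walk from $s$ to $u_{-k}$ in $\AAA$ with the forward walk $u_{-k} \to \cdots \to u$ in $\delta^* \subseteq \delta$ yields some $\beta \in S_u$ whose length-$k$ suffix matches that of $\alpha^*_u$.

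For (A), I would induct on the length $n$ of a walk $s = v_0 \to \cdots \to v_n = u$ spelling $\beta = \beta' \cdot \lambda(u)$. Let $A_1 \in \PPP^*$ be the unique part with $u \in \delta^*_{\lambda(u)}(A_1)$, and let $P \in \PPP^*$ be the part containing $v_{n-1}$. Lemma~\ref{lem: AAprime} applied at the end (where $\XXX^* = \PPP^*$) gives $A_1 \preceq P$. If $\lambda(A_1) < \lambda(P)$, then $\alpha^*_u < \beta$ at the second-to-last position and we are done. If $A_1 = P$, then $v_{n-1} \in A_1$; by forward-stability of $\PPP^*$ on $\AAA^*$ together with the uniqueness of pruned predecessor parts, all states in $A_1$ share a common backward-walk label sequence $\alpha^*_{A_1}$, and the inductive hypothesis yields $\alpha^*_{v_{n-1}} = \alpha^*_{A_1} \le \beta'$, so $\alpha^*_u = \alpha^*_{A_1}\cdot\lambda(u) \le \beta$.

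The main obstacle is the remaining sub-case $A_1 \prec P$ with $\lambda(A_1) = \lambda(P)$. To handle it I would introduce an auxiliary \emph{monotonicity property}: if $A, B \in \PPP^*$ share a common label and $A \prec B$, then $\alpha^*_A \le \alpha^*_B$. Given monotonicity, $\alpha^*_{A_1} \le \alpha^*_P = \alpha^*_{v_{n-1}} \le \beta'$ (the last step by induction), and appending $\lambda(u)$ closes the induction. To prove monotonicity I would do a secondary induction on the iteration at which $A$ and $B$ are first placed into distinct blocks: at that moment they lie in a common block $D$ that is three-way-split by some splitter $B_{sp} \subseteq S$ (Figure~\ref{fig:casesRefinement}), and Line~\ref{line: first or last} together with Algorithm~\ref{alg: pruning} ensures that whichever of $A, B$ ends up in the earlier sub-block has its pruned predecessors drawn from the co-lex smaller side of the pair $(B_{sp}, S\setminus B_{sp})$. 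A case analysis on which of $\{D_{12}, D_{11}, D_2\}$ contains $A$ and which contains $B$ then shows that the pruned predecessor parts $A', B' \in \PPP^*$ of $A$ and $B$ satisfy $A' \preceq B'$, and a simultaneous induction on walk length propagates this to $\alpha^*_A \le \alpha^*_B$. The chief technical difficulty will be verifying that further refinements of $D_{12}, D_{11}, D_2$ in subsequent iterations preserve this predecessor-order invariant.
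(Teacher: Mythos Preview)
Your two-part decomposition is sound and matches the paper: (B) is exactly the reachability argument the paper uses for the greatest-lower-bound half (phrased constructively rather than by contradiction), and the walk-length induction for (A) goes through once your monotonicity lemma is available.

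The real difference is in the auxiliary lemma and how it is proved. The paper does not establish monotonicity of $A\mapsto\alpha^*_A$ directly; instead it proves, by a single induction on the iteration index $i$, that the order $\prec^{(i)}$ induced by $\PPP^{(i)}$ is a \emph{co-lex order} (Definition~\ref{def: colex order}) for the pruned automaton $\AAA^{(i)}$. This is the one-step version of your monotonicity: whenever $v\prec^{(i)} v'$ with $\lambda(v)=\lambda(v')$, their $\delta^{(i)}$-predecessors $u,u'$ (necessarily distinct, since the input is a DFA) satisfy $u\prec^{(i)} u'$. Your full monotonicity then follows immediately by iterating this along the backward sequence of predecessor parts. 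The benefit of the paper's formulation is that at step $i$ one only has to compare predecessors inside the \emph{current} $\PPP^{(i)}$, not inside the final $\PPP^*$: after pruning, each of the newly ordered sub-blocks of $D$ has its $\delta^{(i)}$-predecessors in exactly one of $B_{sp}$ or $S\setminus B_{sp}$ (pruning forces $D_{11}$'s predecessors into $B_{sp}$ when $B_{sp}$ is first), so the required comparison is immediate from the relative order of $B_{sp}$ and $S\setminus B_{sp}$. This is precisely what dissolves the ``chief technical difficulty'' you flag. Your separation-iteration induction, by contrast, has to chase the \emph{final} predecessor parts $A',B'\in\PPP^*$; in the sub-case $A\subseteq D_{12},\ B\subseteq D_{11}$ both predecessor sets land in the single block $B_{sp}$, so the case analysis does not close and you are forced into a further recursion of unclear depth. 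With the co-lex property (equivalently, your monotonicity) in hand, the paper finishes the lower-bound half by a short contradiction argument rather than your direct walk-length induction, but either route works at that stage.
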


It is worth noting that some states possibly have more than one in-going transitions after the termination of the algorithm. Nevertheless, Lemma~\ref{lem: infima encoding} still holds when we choose any of them and remove the others. From this observation, we can assume that every state (except the source state) in the pruned automaton $\AAA^{\inf}$ has exactly one in-going transition.

Similarly, we can compute the pruned automaton for the set of suprema $\{\sup S_v: v\in Q\}$. The only difference is that we start with the partition $\PPP'^{(0)}=\langle Q_{a_k}, \ldots, Q_{a_1}, Q_\eps\rangle$ with the reversed order. It is simple to see the greatest string can be computed with this setting.

Regarding the time complexity, notice that the partition refinement algorithm iterates through the same partitions as if it were run on $\AAA^{\inf}$ in the first place. Additional time is taken for the pruning step for deleting transitions. This work can be done in $O(1)$ time per transition. Once a transition is deleted, it will never be considered in the rest of the execution and hence the additional work amortizes to $O(|\delta|)$. Consequently, the asymptotic time complexity of the partition refinement algorithm with pruning remains $O(|\delta|\log|Q|)$.

\subparagraph{Computing Co-Lex Order of $\IS_\AAA$.} 
Once the two pruned automata $\AAA^{\inf}$ and $\AAA^{\sup}$ are obtained, we can easily compute the co-lex order of $\IS_\AAA$ in $O(|Q|\log|Q|)$ time using the suffix doubling algorithm~\cite{KimOP23}, which extends the well-known prefix-doubling algorithm~\cite{ManberM90}. Instead of accessing via integer indexes for the doubling procedure, each state keeps a pointer referring to another (possibly the same) state that is $2^k$ hops away along its backward walk. We describe all details of this algorithm in Appendix~\ref{sec: app: DFA: suffix doubling}.
In summary, we conclude this section with the following theorem.
\wheelerPoset*

\section{Experimental results}
We implemented our partition refinement algorithms in \texttt{C++} and made it available at \url{https://github.com/regindex/finite-automata-partition-refinement.git}. We compared the algorithm from \Cref{theorem: wheeler preorder} to the only (to the best of our knowledge) other available tool for computing a sorted partition consistent with any Wheeler order of the input NFA, the renaming heuristic from \texttt{WGT}~\cite{chao2022wgt}. For a fair comparison, we only run the first part of the \texttt{WGT} recognizer and remove the exponential time search that is used to subsequently compute a Wheeler order of the states ending up in equivalence classes.\footnote{As observed empirically, the partition computed by \texttt{WGT} is in some cases coarser than the partition computed by our algorithm and is, thus, not necessarily forward-stable.} We used the Wheeler graph generator included in \texttt{WGT} to generate $7$ random input Wheeler NFAs with $|Q| \in \{5^6 \cdot 2^{i}: i=0,\ldots ,6 \}$ and $|\delta| = 3|Q|$ (we cannot generate much denser graphs since for Wheeler graphs $\delta=O(|\Sigma|\cdot|Q|)$ and here $|\Sigma|=5$). Our experiments were run on a server with Intel(R) Xeon(R) W-2245 CPU @ 3.90GHz with 8 cores and 128 gigabytes of RAM running Ubuntu 18.04 LTS 64-bit.
\begin{figure*}[ht!]
    \centering
 	\includegraphics[width=0.44\textwidth, trim={5.5mm 5.5mm 5.0mm 5.5mm}, clip]{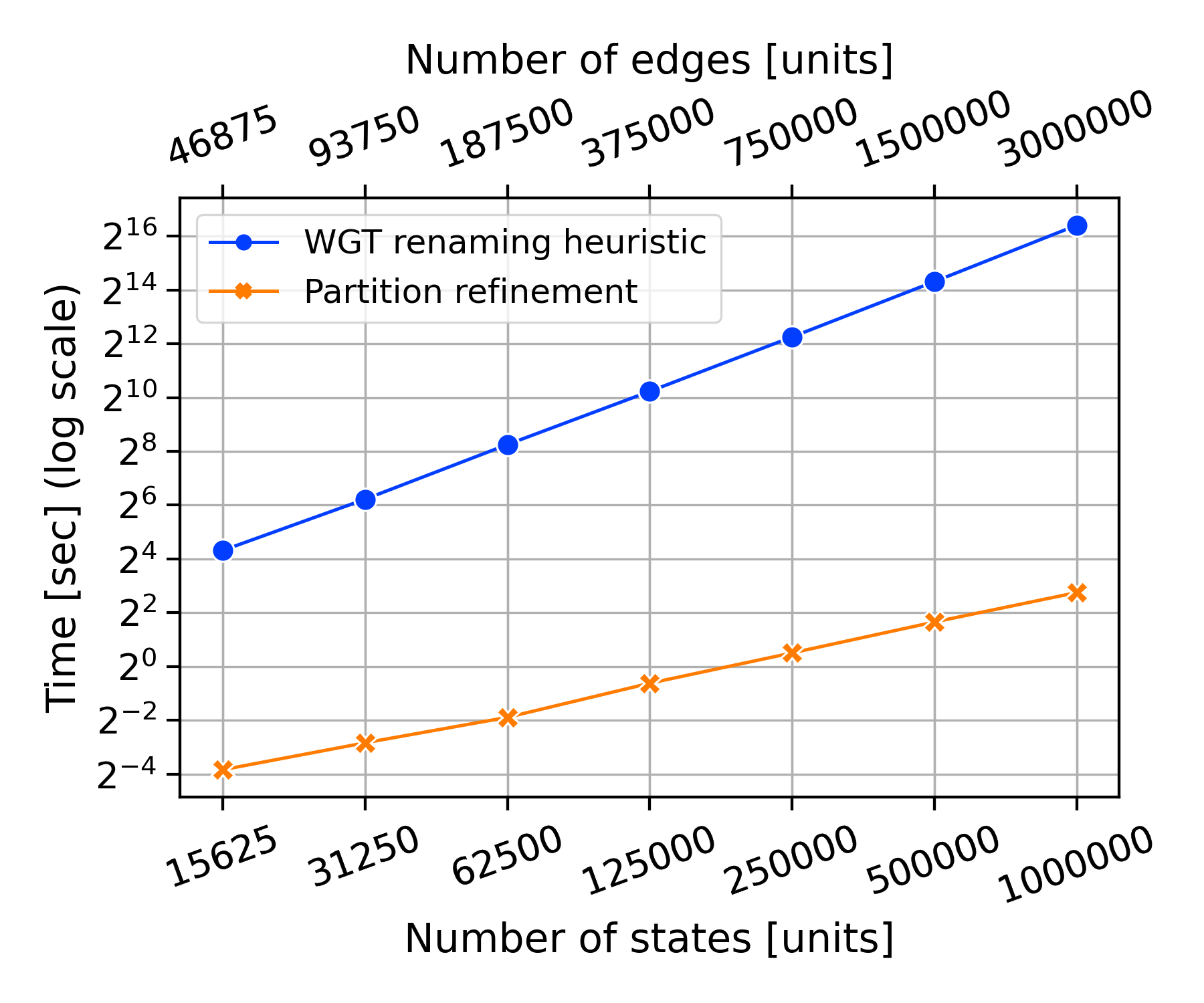}
    \hspace{3mm}
    \includegraphics[width=0.44\textwidth, trim={5.5mm 5.5mm 5.0mm 5.5mm}, clip]{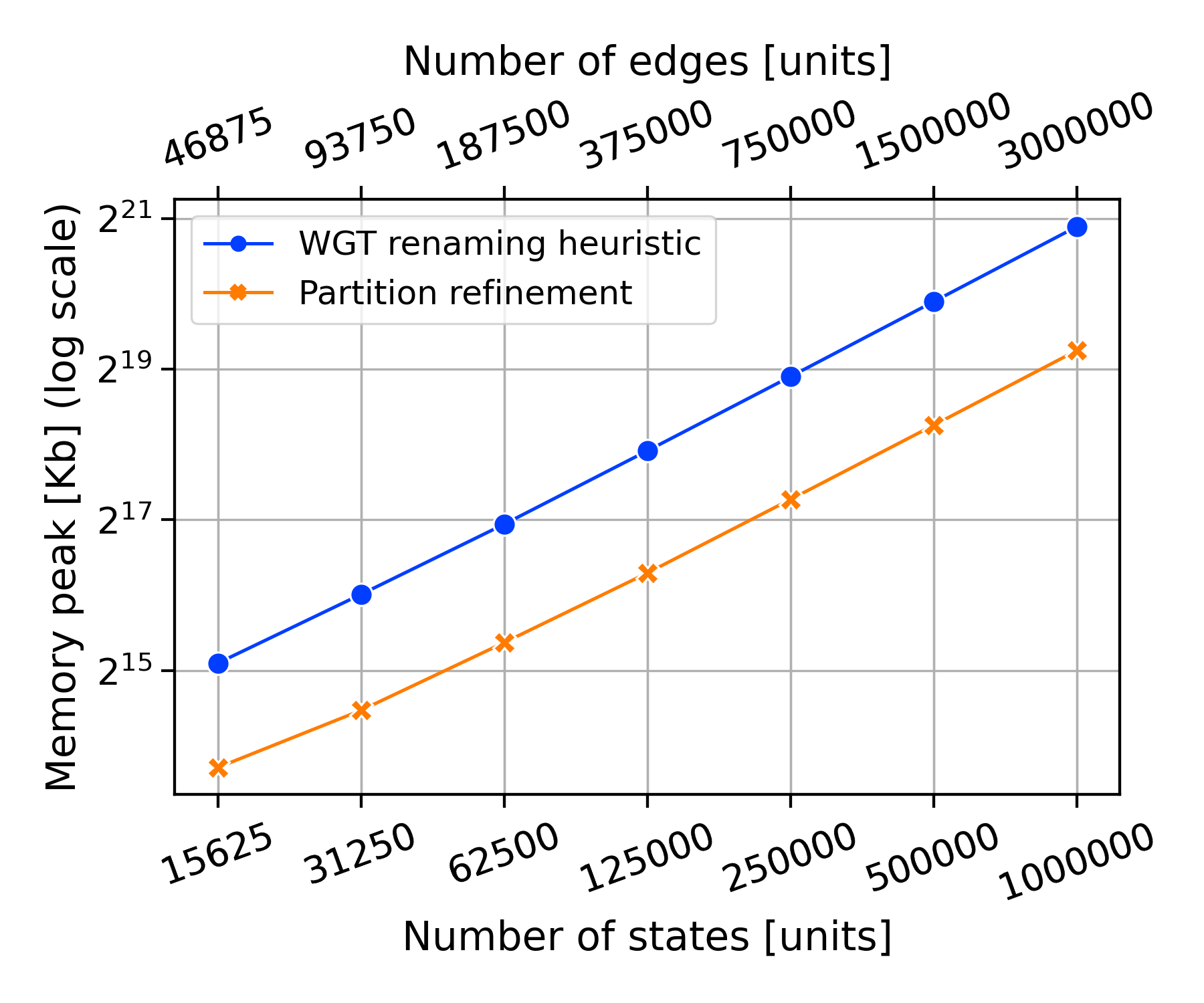}
    \caption{CPU time (left) and memory peak (right) for sorting seven Wheeler NFAs using our partition refinement algorithm and the renaming heuristic contained in the \texttt{WGT} recognizer software. Datasets generated using \texttt{WGT} specifying seven different combinations of number of states and edges.}
    \label{fig:exp1}
    %\vspace{-4mm}
\end{figure*}

\Cref{fig:exp1} shows the running time and peak memory consumption of both implementations. As expected, our partition refinement implementation shows a slight super-linear behavior confirming the $O(|\delta|\cdot\log{|Q|})$ worst-case running time of the algorithm. On the other hand, \texttt{WGT} shows a quadratic behavior, which is better than the cubic $O(|\delta|\cdot|Q|^2)$ bound for the Forward Algorithm of Alanko et.~al~\cite{alanko:iac21}, but between $\approx 100\times$ to $\approx 10000\times$ slower than our implementation. In terms of peak memory, both implementations behave linearly with respect to the automaton's size with a $3\times$ advantage in favor of our implementation. On the largest input instance containing one million states, our implementation computes the Wheeler preorder in about seven seconds. In a second experiment, we show that our implementation of the algorithm from \Cref{theorem: wheeler poset} can prune a pangenomic DFA containing over 51 million states and 53 million edges in 355 seconds with a peak memory of 33.2 Gb.

%%
%% Bibliography
%%

%% Please use bibtex, 

\bibliography{main}

\begin{thebibliography}{10}

\bibitem{alanko2020regular}
Jarno Alanko, Giovanna D'Agostino, Alberto Policriti, and Nicola Prezza.
\newblock Regular languages meet prefix sorting.
\newblock In {\em Proceedings of the Fourteenth Annual ACM-SIAM Symposium on
  Discrete Algorithms}, pages 911--930. SIAM, 2020.

\bibitem{alanko:iac21}
Jarno Alanko, Giovanna D'Agostino, Alberto Policriti, and Nicola Prezza.
\newblock Wheeler languages.
\newblock {\em Information and Computation}, 281:104820, 2021.
\newblock URL:
  \url{https://www.sciencedirect.com/science/article/pii/S0890540121001504},
  \href {https://doi.org/https://doi.org/10.1016/j.ic.2021.104820}
  {\path{doi:https://doi.org/10.1016/j.ic.2021.104820}}.

\bibitem{BaierK08}
Christel Baier and Joost{-}Pieter Katoen.
\newblock {\em Principles of model checking}.
\newblock {MIT} Press, 2008.

\bibitem{baier:cpm16}
Uwe Baier.
\newblock {Linear-time Suffix Sorting - A New Approach for Suffix Array
  Construction}.
\newblock In Roberto Grossi and Moshe Lewenstein, editors, {\em 27th Annual
  Symposium on Combinatorial Pattern Matching (CPM 2016)}, volume~54 of {\em
  Leibniz International Proceedings in Informatics (LIPIcs)}, pages
  23:1--23:12, Dagstuhl, Germany, 2016. Schloss Dagstuhl--Leibniz-Zentrum fuer
  Informatik.
\newblock URL: \url{http://drops.dagstuhl.de/opus/volltexte/2016/6069}, \href
  {https://doi.org/10.4230/LIPIcs.CPM.2016.23}
  {\path{doi:10.4230/LIPIcs.CPM.2016.23}}.

\bibitem{ESAsubmission}
Ruben {Becker}, Manuel {C\'aceres}, Davide {Cenzato}, Sung-Hwan {Kim}, Bojana
  {Kodric}, Francisco {Olivares}, and Nicola {Prezza}.
\newblock {Sorting Finite Automata via Partition Refinement}.
\newblock In {\em Proceedings of the 31st Annual European Symposium on
  Algorithms (ESA)}, pages 15:1--15, 2023.
\newblock \href {https://doi.org/10.4230/LIPIcs.ESA.2023.15}
  {\path{doi:10.4230/LIPIcs.ESA.2023.15}}.

\bibitem{chao2022wgt}
Kuan-Hao Chao, Pei-Wei Chen, Sanjit~A Seshia, and Ben Langmead.
\newblock {WGT}: {T}ools and algorithms for recognizing, visualizing and
  generating {W}heeler graphs.
\newblock {\em bioRxiv}, 2022.
\newblock \href {https://doi.org/10.1101/2022.10.15.512390}
  {\path{doi:10.1101/2022.10.15.512390}}.

\bibitem{computational2018computational}
{Computational Pan-Genomics Consortium}.
\newblock Computational pan-genomics: status, promises and challenges.
\newblock {\em Briefings in bioinformatics}, 19(1):118--135, 2018.

\bibitem{Cotumaccio:dcc22}
Nicola Cotumaccio.
\newblock Graphs can be succinctly indexed for pattern matching in {$O(\vert
  E\vert ^{2}+\vert V\vert ^{5/2})$} time.
\newblock In Ali Bilgin, Michael~W. Marcellin, Joan Serra{-}Sagrist{\`{a}}, and
  James~A. Storer, editors, {\em Data Compression Conference, {DCC} 2022,
  Snowbird, UT, USA, March 22-25, 2022}, pages 272--281. {IEEE}, 2022.
\newblock \href {https://doi.org/10.1109/DCC52660.2022.00035}
  {\path{doi:10.1109/DCC52660.2022.00035}}.

\bibitem{CotumaccioDGPP:arxiv22}
Nicola Cotumaccio, Giovanna D'Agostino, Alberto Policriti, and Nicola Prezza.
\newblock {Co-Lexicographically Ordering Automata and Regular Languages -- Part
  I}.
\newblock {\em Journal of the ACM}, 70(4):27:1--27:73, 2023.
\newblock \href {https://doi.org/10.1145/3607471} {\path{doi:10.1145/3607471}}.

\bibitem{CotumaccioP21}
Nicola Cotumaccio and Nicola Prezza.
\newblock On indexing and compressing finite automata.
\newblock In D{\'{a}}niel Marx, editor, {\em Proceedings of the 2021 {ACM-SIAM}
  Symposium on Discrete Algorithms, {SODA} 2021, Virtual Conference, January 10
  - 13, 2021}, pages 2585--2599. {SIAM}, 2021.
\newblock \href {https://doi.org/10.1137/1.9781611976465.153}
  {\path{doi:10.1137/1.9781611976465.153}}.

\bibitem{dilworth1987decomposition}
Robert~P. Dilworth.
\newblock {A Decomposition Theorem for Partially Ordered Sets}.
\newblock {\em Annals of Mathematics}, 51(1):161--166, 1950.
\newblock \href {https://doi.org/10.2307/1969503} {\path{doi:10.2307/1969503}}.

\bibitem{equi:sofsem21}
Massimo Equi, Veli M{\"a}kinen, and Alexandru~I. Tomescu.
\newblock Graphs cannot be indexed in polynomial time for sub-quadratic time
  string matching, unless seth fails.
\newblock In {\em Proceedings of the 47th International Conference on Current
  Trends in Theory and Practice of Computer Science (SOFSEM)}, pages 608--622,
  2021.
\newblock \href {https://doi.org/10.1007/978-3-030-67731-2_44}
  {\path{doi:10.1007/978-3-030-67731-2_44}}.

\bibitem{equi2023complexity}
Massimo Equi, Veli M{\"a}kinen, Alexandru~I Tomescu, and Roberto Grossi.
\newblock On the complexity of string matching for graphs.
\newblock {\em ACM Transactions on Algorithms}, 19(3):1--25, 2023.

\bibitem{gagie:latin14}
Travis Gagie, Pawe{\l} Gawrychowski, Juha K{\"a}rkk{\"a}inen, Yakov Nekrich,
  and Simon~J. Puglisi.
\newblock Lz77-based self-indexing with faster pattern matching.
\newblock In Alberto Pardo and Alfredo Viola, editors, {\em LATIN 2014:
  Theoretical Informatics}, pages 731--742, Berlin, Heidelberg, 2014. Springer
  Berlin Heidelberg.

\bibitem{gagie2017wheeler}
Travis Gagie, Giovanni Manzini, and Jouni Sir{\'e}n.
\newblock {Wheeler graphs: A framework for BWT-based data structures}.
\newblock {\em Theoretical computer science}, 698:67--78, 2017.

\bibitem{gagie:soda18}
Travis Gagie, Gonzalo Navarro, and Nicola Prezza.
\newblock {Optimal-Time Text Indexing in BWT-runs Bounded Space}.
\newblock In {\em Proceedings of the 2018 Annual ACM-SIAM Symposium on Discrete
  Algorithms (SODA)}, pages 1459--1477, 2018.
\newblock URL: \url{https://epubs.siam.org/doi/abs/10.1137/1.9781611975031.96},
  \href {https://doi.org/10.1137/1.9781611975031.96}
  {\path{doi:10.1137/1.9781611975031.96}}.

\bibitem{GibneyT19}
Daniel Gibney and Sharma~V. Thankachan.
\newblock On the hardness and inapproximability of recognizing wheeler graphs.
\newblock In Michael~A. Bender, Ola Svensson, and Grzegorz Herman, editors,
  {\em 27th Annual European Symposium on Algorithms, {ESA} 2019, September
  9-11, 2019, Munich/Garching, Germany}, volume 144 of {\em LIPIcs}, pages
  51:1--51:16. Schloss Dagstuhl - Leibniz-Zentrum f{\"{u}}r Informatik, 2019.

\bibitem{KanellakisS90}
Paris~C. Kanellakis and Scott~A. Smolka.
\newblock {CCS} expressions, finite state processes, and three problems of
  equivalence.
\newblock {\em Inf. Comput.}, 86(1):43--68, 1990.

\bibitem{KimOP23}
Sung-Hwan Kim, Francisco Olivares, and Nicola Prezza.
\newblock {Faster Prefix-Sorting Algorithms for Deterministic Finite Automata}.
\newblock In {\em 34th Annual Symposium on Combinatorial Pattern Matching (CPM
  2023)}, pages 16:1--16:16, 2023.
\newblock \href {https://doi.org/10.4230/LIPIcs.CPM.2023.16}
  {\path{doi:10.4230/LIPIcs.CPM.2023.16}}.

\bibitem{ManberM90}
Udi Manber and Gene Myers.
\newblock {Suffix Arrays: A New Method for on-Line String Searches}.
\newblock In {\em Proceedings of the 1st Annual ACM-SIAM Symposium on Discrete
  Algorithms (SODA)}, pages 319--327, 1990.
\newblock \href {https://doi.org/10.1137/0222058} {\path{doi:10.1137/0222058}}.

\bibitem{makinen:jcb10}
Veli Mäkinen, Gonzalo Navarro, Jouni Sirén, and Niko Välimäki.
\newblock Storage and retrieval of highly repetitive sequence collections.
\newblock {\em Journal of computational biology : a journal of computational
  molecular cell biology}, 17:281--308, 03 2010.
\newblock \href {https://doi.org/10.1089/cmb.2009.0169}
  {\path{doi:10.1089/cmb.2009.0169}}.

\bibitem{navarro:cup16}
Gonzalo Navarro.
\newblock {\em Compact Data Structures -- A practical approach}.
\newblock Cambridge University Press, 2016.
\newblock ISBN 978-1-107-15238-0. 536 pages.

\bibitem{PaigeT87}
Robert Paige and Robert~Endre Tarjan.
\newblock Three partition refinement algorithms.
\newblock {\em {SIAM} J. Comput.}, 16(6):973--989, 1987.
\newblock \href {https://doi.org/10.1137/0216062} {\path{doi:10.1137/0216062}}.

\bibitem{siren2021pangenomics}
Jouni Sir{\'e}n, Jean Monlong, Xian Chang, Adam~M Novak, Jordan~M Eizenga,
  Charles Markello, Jonas~A Sibbesen, Glenn Hickey, Pi-Chuan Chang, Andrew
  Carroll, et~al.
\newblock Pangenomics enables genotyping of known structural variants in 5202
  diverse genomes.
\newblock {\em Science}, 374(6574):abg8871, 2021.

\bibitem{siren:ieee14}
Jouni Sirén, Niko Välimäki, and Veli Mäkinen.
\newblock Indexing graphs for path queries with applications in genome
  research.
\newblock {\em IEEE/ACM Transactions on Computational Biology and
  Bioinformatics}, 11(2):375--388, 2014.
\newblock \href {https://doi.org/10.1109/TCBB.2013.2297101}
  {\path{doi:10.1109/TCBB.2013.2297101}}.

\bibitem{weiner1973linear}
Peter Weiner.
\newblock Linear pattern matching algorithms.
\newblock In {\em Switching and Automata Theory, 1973. SWAT'08. IEEE Conference
  Record of 14th Annual Symposium on}, pages 1--11. IEEE, 1973.

\bibitem{williams2005new}
Ryan Williams.
\newblock A new algorithm for optimal 2-constraint satisfaction and its
  implications.
\newblock {\em Theoretical Computer Science}, 348(2-3):357--365, 2005.

\end{thebibliography}

\newpage

\appendix

\section{Forward-Stability and Bisimulation}\label{app: bisimilarity}
\begin{definition}[Bisimulation]
	Given an NFA $\AAA=(Q, \delta, s)$ over a finite alphabet $\Sigma$, a bisimulation on $\AAA$ is a relation $B\subseteq Q \times Q$ such that for all $(u, v)\in B$ and $a\in \Sigma$, it holds that:
	\begin{itemize}
	    \item If $u'\in\delta_a(u)$, then there exists $v'\in \delta_a(v)$ such that $(u', v')\in B$.
        \item If $v'\in\delta_a(v)$, then there exists $u'\in \delta_a(u)$ such that $(u', v')\in B$.
    \end{itemize} 
\end{definition}
Any bisimulation $B$ on $\AAA$ partitions $V$ into sets $P$ that correspond to the weakly connected components of the directed graph $(V,B)$. We  denote the resulting partition of $V$ with $\PPP_B$. Note that the relation $\bar B:=\bigcup_{P\in\PPP_B}P^2$ is again a bisimulation that we call the \emph{closure} of $B$.

\begin{definition}[Bisimilarity]
	Given an NFA $\AAA=(Q, \delta, s)$ over a finite alphabet $\Sigma$, the \emph{bisimilarity} of $\AAA$ is the union of all bisimulations on $\AAA$. 
\end{definition}
It is clear that the bisimilarity of $\AAA$ is again a bisimulation on $\AAA$ and it partitions $Q$ into equivalence classes, we write $\PPP_\sim$ for the \emph{bisimilarity partition} on $G$ consisting of the partition with the equivalence classes as its parts.

Given an NFA $\AAA=(Q, \delta, s)$ over a finite alphabet $\Sigma$, we define the NFA $\AAA^{-1}=(Q, \delta^{-1}, s)$ as the automaton with the same state set $Q$ and the transition function $\delta^{-1}$ that is defined as $\delta^{-1}(u, a):=\delta^{-1}_a(u)$ for all $u\in Q$ and $a\in \Sigma$.

\begin{lemma}\label{lem: bisim -> forward-stable}
	Given an NFA $\AAA=(Q, \delta, s)$ over a finite alphabet $\Sigma$ and any bisimulation $B$ on $\AAA^{-1}$. Then, the partition $\PPP_{B}$ is forward-stable on $\AAA$.
\end{lemma}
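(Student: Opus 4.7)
The plan is to show forward-stability directly by unpacking the bisimulation condition on $\AAA^{-1}$ and translating it into a forward condition on $\AAA$. Fix two parts $S,T\in \PPP_B$ and a letter $a\in\Sigma$, and assume $S\cap\delta_a(T)\neq\emptyset$; the goal is to prove $S\subseteq\delta_a(T)$. So pick some witness $u\in S$ and $t\in T$ with $u\in\delta_a(t)$, i.e.\ $t\in\delta^{-1}_a(u)$ in $\AAA^{-1}$, and an arbitrary $v\in S$; I need to produce some $t'\in T$ with $v\in\delta_a(t')$.

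The key preliminary observation is that the closure $\bar B=\bigcup_{P\in\PPP_B}P^2$ is itself a bisimulation on $\AAA^{-1}$, and that it induces the same partition as $B$, i.e.\ $\PPP_{\bar B}=\PPP_B$. This is standard: one checks that $B^{-1}$ is a bisimulation (the two clauses of the bisimulation definition are symmetric, so swapping them yields the condition for $B^{-1}$), that unions of bisimulations are bisimulations, and that the composition $R\circ R$ of a bisimulation with itself is a bisimulation (both clauses lift directly by applying $R$ twice). Consequently the reflexive-symmetric-transitive closure of $B$ is a bisimulation, and by construction this closure is exactly $\bar B$.

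Now since $u,v$ lie in the same weakly connected component $S$ of $(Q,B)$, we have $(u,v)\in\bar B$. Applying the bisimulation property of $\bar B$ on $\AAA^{-1}$ to the pair $(u,v)$ and the predecessor $t\in\delta^{-1}_a(u)$, there exists $t'\in\delta^{-1}_a(v)$ such that $(t,t')\in\bar B$. By definition of $\bar B$, $t'$ lies in the same part of $\PPP_B$ as $t$, namely $T$. Rewriting $t'\in\delta^{-1}_a(v)$ as $v\in\delta_a(t')\subseteq\delta_a(T)$ closes the argument, giving $S\subseteq\delta_a(T)$ as required.

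The only mildly subtle step is the first one, namely justifying that $\bar B$ (rather than just $B$) is a bisimulation, since the definition of $\PPP_B$ goes through weakly connected components and hence implicitly uses both $B$ and $B^{-1}$ as well as transitive chaining. Once that closure fact is established, the rest of the proof is a single application of the bisimulation clause on $\AAA^{-1}$, so no further technical obstacle is expected.
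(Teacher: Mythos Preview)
Your proof is correct and follows essentially the same approach as the paper: both arguments pass to the closure $\bar B$, use that $(u,v)\in\bar B$ for any two states in the same part $S$, and then apply a single bisimulation step on $\AAA^{-1}$ to transport a $T$-predecessor of $u$ to a $T$-predecessor of $v$. The only differences are cosmetic---you argue directly while the paper argues by contradiction, and you spell out why $\bar B$ is a bisimulation (via closure under inverse, union, and composition) whereas the paper simply asserts this fact in the paragraph preceding the lemma.
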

\begin{proof}
	Let $S,T$ be two parts of $\PPP_{B}$. We show that $S\subseteq\delta_a(T)$ or $S\cap \delta_a(T)  = \emptyset$. Assume for contradiction that this is not the case, i.e., that $S\cap\delta_a(T)\neq \emptyset$ and $S\setminus \delta_a(T)\neq \emptyset$ holds. Hence, there exists $u\in S\cap\delta_a(T)$ and $v\in S\setminus \delta_a(T)$. As both $u\in S$ and $v\in S$, it follows that $(u,v)\in \bar B$. Notice however that $u\in S\cap\delta_a(T)$ and thus there exists $u'\in \delta_a^{-1}(u)\cap T$. Hence by the bisimulation $\bar B$, there exists $v'\in \delta_a^{-1}(v)$ such that $(u',v')\in \bar B$. As a consequence also $v'\in T$, a contradiction to $v\notin \delta_a(T)$. 
\end{proof}

\begin{lemma}\label{lem: forward-stable -> bisim}
	Given an NFA $\AAA=(Q, \delta, s)$ over a finite alphabet $\Sigma$, any forward-stable partition $\PPP$ on $G$ induces a bisimulation $B_\PPP:=\{(u,v)\in Q\times Q: u, v\in S, S\in \PPP\}$.
\end{lemma}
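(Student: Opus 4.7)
The plan is to prove that $B_\PPP$ satisfies the two symmetric conditions in the definition of a bisimulation on $\AAA^{-1}$ (the excerpt's Lemma~\ref{lem: bisim -> forward-stable} shows the converse direction, and the natural symmetric statement for this lemma is also bisimulation \emph{on $\AAA^{-1}$}, since the transitions of $\AAA^{-1}$ are precisely $\delta^{-1}_a$). Unpacked, I need to show: for every $(u,v)\in B_\PPP$ and every $a\in\Sigma$, if $u'\in \delta^{-1}_a(u)$ then there exists $v'\in \delta^{-1}_a(v)$ with $(u',v')\in B_\PPP$, and the analogous statement with the roles of $u$ and $v$ swapped.

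The key step will be to reduce the quantifier ``there exists $v'$'' to an application of forward-stability between two specific parts of $\PPP$. Concretely, fix $(u,v)\in B_\PPP$, so by definition there is a part $S\in\PPP$ with $u,v\in S$. Given $u'\in \delta^{-1}_a(u)$, let $T\in\PPP$ be the unique part containing $u'$. Since $u\in \delta_a(u')\subseteq \delta_a(T)$ and $u\in S$, we have $S\cap \delta_a(T)\neq\emptyset$. Forward-stability of $S$ with respect to $T$ then forces the alternative $S\subseteq \delta_a(T)$. In particular $v\in \delta_a(T)$, so there exists $v'\in T$ with $v\in \delta_a(v')$, i.e.\ $v'\in \delta^{-1}_a(v)$. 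Since both $u'$ and $v'$ lie in $T$, we conclude $(u',v')\in B_\PPP$, as required.

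The second condition (starting from $v'\in \delta^{-1}_a(v)$ and producing a suitable $u'\in \delta^{-1}_a(u)$) is completely symmetric and uses the same application of forward-stability, now started from $v$ rather than $u$; no new idea is needed because $B_\PPP$ is defined symmetrically in $u$ and $v$. I do not expect any real obstacle: the only subtlety worth flagging is that the lemma should be read as producing a bisimulation on $\AAA^{-1}$ (matching the direction of Lemma~\ref{lem: bisim -> forward-stable}), and that this is precisely why forward-stability in the \emph{forward} direction of $\AAA$ gives the required matching of \emph{incoming} transitions. Finally, observing that $B_\PPP$ is reflexive and transitive (trivially, because membership in a common part is an equivalence relation whose classes are exactly the parts of $\PPP$) shows that $\PPP_{B_\PPP}=\PPP$, which together with Lemma~\ref{lem: bisim -> forward-stable} yields the stated correspondence between forward-stable partitions of $\AAA$ and bisimulation-induced partitions of $\AAA^{-1}$.
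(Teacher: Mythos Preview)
Your argument is correct and essentially identical to the paper's proof: fix $(u,v)\in B_\PPP$ in a common part $S$, take $u'\in\delta_a^{-1}(u)$ in part $T$, use forward-stability of $S$ with respect to $T$ to get $S\subseteq\delta_a(T)$ and hence a matching $v'\in T\cap\delta_a^{-1}(v)$, then appeal to symmetry. Your explicit remark that the bisimulation is on $\AAA^{-1}$ is a useful clarification that the paper leaves implicit in the lemma statement but relies on in its proof and in the subsequent corollary.
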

\begin{proof}
    We just need to check that $B_\PPP$ indeed is a bisimulation. Let $(u,v)\in B_\PPP$, then by definition $u, v\in S$ for some part $S\in \PPP$. Now assume that there exists $u'\in\delta^{-1}_a(u)$ for some $a\in\Sigma$. Let $T$ be the part of $\PPP$ such that $u'\in T$. By forward-stability it follows that $S\subseteq\delta_a(T)$ and thus there exists $v'\in T$ such that $v'\in \delta^{-1}_a(v)$. By the definition of $B_\PPP$ it furthermore follows that $(u', v')\in B_\PPP$. The second condition for $B_\PPP$ being a bisimulation is shown in a symmetric way.
\end{proof}

\begin{corollary}
    Given an NFA $\AAA=(Q, \delta, s)$ over a finite alphabet $\Sigma$, the coarsest forward-stable partition $\QQQ$ on $\AAA$ is identical to the bisimilarity partition $\PPP_\sim$ on $\AAA^{-1}$.
\end{corollary}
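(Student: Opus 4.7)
The plan is to prove $\QQQ=\PPP_\sim$ by mutual refinement, invoking the two preceding lemmas in Appendix~\ref{app: bisimilarity} as black boxes. Concretely, I will first show that $\PPP_\sim$ is forward-stable on $\AAA$ (hence a refinement of the coarsest forward-stable partition $\QQQ$), and then show that $\QQQ$ induces a bisimulation on $\AAA^{-1}$ that must be contained in the largest bisimulation $\sim$ (hence $\QQQ$ is a refinement of $\PPP_\sim$). Mutual refinement then forces equality.

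For the first direction, observe that the bisimilarity $\sim$ of $\AAA^{-1}$ is itself a bisimulation on $\AAA^{-1}$: it is defined as the union of all bisimulations, and an easy direct check shows that an arbitrary union of bisimulations on an NFA is again a bisimulation (if $(u,v)$ lies in some $B_i$ in the union and $u'\in\delta_a^{-1}(u)$, then the required matching $v'$ exists already inside that $B_i$, hence in the union). Since $\sim$ is already an equivalence relation, the induced partition $\PPP_\sim$, whose parts are the weakly connected components of $(Q,\sim)$, is simply the partition of $Q$ into $\sim$-equivalence classes. Applying Lemma~\ref{lem: bisim -> forward-stable} to the bisimulation $\sim$ then yields that $\PPP_\sim$ is forward-stable on $\AAA$, so by the very definition of $\QQQ$ as the coarsest forward-stable partition of $\AAA$, every part of $\PPP_\sim$ is contained in a part of $\QQQ$.

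For the converse direction, apply Lemma~\ref{lem: forward-stable -> bisim} to $\QQQ$ to produce a bisimulation $B_\QQQ$ on $\AAA^{-1}$ satisfying $(u,v)\in B_\QQQ$ iff $u$ and $v$ lie in a common part of $\QQQ$. Since $\sim$ is the union of all bisimulations on $\AAA^{-1}$, we have $B_\QQQ\subseteq\,\sim$; unpacking, any two states in the same part of $\QQQ$ are $\sim$-equivalent, and thus they lie in the same part of $\PPP_\sim$. Consequently every part of $\QQQ$ is contained in a part of $\PPP_\sim$. Combining with the previous paragraph, the two partitions refine each other and must coincide, which as a side effect also certifies the uniqueness of the coarsest forward-stable partition asserted in the main text. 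There is no real obstacle to overcome here; the corollary is essentially the composition of the two preceding lemmas, and the only routine verification is the elementary observation that an arbitrary union of bisimulations on an NFA is again a bisimulation.
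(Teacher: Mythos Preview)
Your proof is correct and follows essentially the same approach as the paper: both invoke Lemma~\ref{lem: bisim -> forward-stable} to see that $\PPP_\sim$ is forward-stable and Lemma~\ref{lem: forward-stable -> bisim} to see that $B_\QQQ$ is a bisimulation contained in $\sim$. The only cosmetic difference is that the paper packages the argument as a proof by contradiction (using $|\QQQ|\le|\PPP_\sim|$ and then deriving a violation of $B_\QQQ\subseteq\,\sim$), whereas you argue directly by mutual refinement; the content is the same.
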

\begin{proof}
    Assume for contradiction that $\QQQ\neq \PPP_\sim$. Recall that $\PPP_\sim$ is forward-stable due to Lemma~\ref{lem: bisim -> forward-stable} and thus $|\QQQ|\le |\PPP_\sim|$ as $\QQQ$ is the coarsest forward-stable partition. Now, $\QQQ\neq \PPP_\sim$ and $|\QQQ|\le |\PPP_\sim|$ imply that there exist $u,v\in S\in\QQQ$ such that $u\in P_1$ and $v\in P_2$ with $P_1,P_2\in \PPP_\sim$ and $P_1\neq P_2$. By Lemma~\ref{lem: forward-stable -> bisim}, it follows that $B_\QQQ$ is a bisimulation and $(u,v)\in B_\QQQ$. As $\PPP_\sim$ is defined as the union of all bisimulations this is however a contradiction to $u$ and $v$ being in different parts in $\PPP_\sim$.
\end{proof}

\section{Deferred Material for Section~\ref{sec: ordered partition refinement}}
\subsection{Deferred Proofs}
\label{sec: partition refinement proof}
\invariant*
\begin{proof}
    Note that the property holds before the first iteration as $\XXX$ contains $Q$ as its unique part then and every state is assumed to have at least one in-going transition. Inductively assume that the property also holds before some intermediate iteration of the while loop. Assume that $S$ is the block from $\XXX$ chosen in that iteration. The only modification made to $\XXX$ is the replacement of $S$ with $B, S\setminus B$ or $S\setminus B, B$ in line~\ref{line: update X}. As we have to show that every part of $\PPP$ is stable with respect to every part of $\XXX$ at the end of the iteration. For the blocks $B$ and $S\setminus B$, this is clear by the refinement step of the algorithm. For the other blocks in $\XXX$ this follows from the induction hypothesis and the fact that every subset of a forward-stable set is again forward-stable.
\end{proof}

\begin{lemma}\label{lem: nearly linear}
    The following two hold:
    \begin{enumerate}[(1)]
        \item \label{line: refine bound} A refinement step with splitter $B$ takes $O(|B|+ \sum_{a\in\Sigma}\sum_{v\in B} |\delta_a(v)|)$ time. 
        \item \label{line: no of splitters} Every $v\in Q$ is in at most $\log |Q| + 1$ different splitters $B$ used for the refinement step.
    \end{enumerate}
    Hence, the ordered partition refinement  algorithm can be implemented in $O(|\delta|\cdot\log |Q|)$.
\end{lemma}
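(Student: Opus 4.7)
The plan is to prove the two items of \Cref{lem: nearly linear} separately and then combine them, mirroring the analysis of Paige and Tarjan while adding the bookkeeping required to preserve the order of the partition.

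For~(\ref{line: no of splitters}), I would argue via a halving invariant on the unique part $S_v \in \XXX$ containing $v$. Initially $S_v = Q$. Whenever $v$ appears in some splitter $B$, the choice of $S$ in Line~\ref{line: choose S} forces $v \in B \subseteq S_v$, and the rule in Line~\ref{line: first or last} ensures $|B| \le |S_v|/2$. After Line~\ref{line: update X} updates $\XXX$, the new part of $\XXX$ containing $v$ becomes precisely $B$, whose size is at most $|S_v|/2$. Thus each participation of $v$ in a splitter at least halves the size of its $\XXX$-part; since this size is a positive integer starting at $|Q|$, $v$ is in at most $\lfloor\log_2 |Q|\rfloor + 1$ splitters.

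For~(\ref{line: refine bound}), I would adapt the data-structure toolbox of Paige and Tarjan with light additions for the order. Each part $P \in \PPP$ is a doubly-linked list of states with a size counter; every state carries pointers to its list node and to its containing part. The parts of $\PPP$ themselves form a doubly-linked list reflecting the current total order; each $P \in \PPP$ also points to its containing part in $\XXX$, and conversely each $S \in \XXX$ stores a sublist of its constituent $\PPP$-parts (ordered as in $\PPP$) together with pointers to its first and last constituents. These pointers let Lines~\ref{line: choose S}--\ref{line: first or last} and the ordered replacements of $S$ in $\XXX$ and of $D$ in $\PPP$ execute in $O(1)$ per newly created block. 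To perform the three-way split itself I would use Paige and Tarjan's counter trick: for the unique in-letter $a = \lambda(v)$ of each state $v$, we maintain an incrementally updated count of its in-edges from the current $\XXX$-part of its in-neighbours. A single pass over the out-edges of $B$ then suffices to update a second counter (in-edges from $B$) and to mark, for every $v \in \delta_a(B)$, whether $v \in D_{11}$ or $v \in D_{12}$; states in $D_2 = D \setminus \delta_{\lambda(D)}(B)$ are identified implicitly as the unmarked members of $D$ and need never be scanned. Hence each refinement step costs $O\bigl(|B| + \sum_{a \in \Sigma}\sum_{v \in B}|\delta_a(v)|\bigr)$.

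To conclude, I would sum this bound across all refinement steps. Writing $\kappa(v)$ for the number of splitters containing $v$, the total cost is at most
\[
\sum_{v \in Q}\kappa(v) \cdot \Bigl(1 + \sum_{a \in \Sigma}|\delta_a(v)|\Bigr),
\]
which by~(\ref{line: no of splitters}) is $O\bigl((|Q| + |\delta|)\log|Q|\bigr) = O(|\delta|\log|Q|)$, using $|Q| \le |\delta|+1$. The main obstacle I anticipate lies not in Paige and Tarjan's core counting trick but in the \emph{ordered} aspect of the three-way split: one must insert $D_{12}, D_{11}, D_2$ (or $D_2, D_{11}, D_{12}$, depending on whether $B$ is the first or last block of $S$) into the global linked list of $\PPP$-parts exactly at the position of the just-removed $D$, while simultaneously refreshing the first/last-constituent pointers of every $\XXX$-part that overlapped $D$. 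Both can be done in $O(1)$ per created block by giving each $\PPP$-part a back-pointer into its $\XXX$-part's constituent sublist, and by performing the insertion as a linked-list splice at $D$'s former position; verifying that this maintenance never costs more than the Paige--Tarjan budget is the essential technical step.
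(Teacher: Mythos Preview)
Your proposal is correct and follows essentially the same approach as the paper: the halving argument for~(\ref{line: no of splitters}) via the size of the $\XXX$-part containing $v$, the Paige--Tarjan doubly-linked-list and counter machinery for~(\ref{line: refine bound}), and the same summation for the overall bound. Your write-up is in fact more explicit than the paper's proof proper about the data-structure additions needed to maintain the order (first/last pointers, linked-list splice at $D$'s position), which the paper relegates to an appendix; the paper also implicitly copies $B$ before splitting so that the scan of its out-edges is unaffected when $B$ itself gets refined, a detail you may want to mention.
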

\begin{proof}
    Part~\eqref{line: refine bound} can be achieved by an efficient implementation that stores the partitions $\PPP$ and $\XXX$ in a specific data structure (based on double-linked lists) that is identical to the one described by Paige and Tarjan. We include more details in the following subsection. The data structure guarantees that every refinement step takes $O(1)$ per state contained in the splitter $B$ and $O(1)$ per transition leaving $B$, i.e., the refinement by splitter $B$ takes $O(|B|+ \sum_{a\in \Sigma}\sum_{v\in B} |\delta_a(v)|)$. For part~\eqref{line: no of splitters} notice that the splitter sets, say, $B_1,\ldots, B_\ell$ in which a fixed state $v$ is contained satisfy $|B_i|\ge |B_{i+1}|/2$ for each $i\in [\ell - 1]$. As $|B_1|\le |Q|$, it follows that $v$ is in at most $\ell \le \log |Q| + 1$.
    Let us now call $\BBB$ the set of all splitters used by the algorithm. The total time needed for all refinement steps can thus be bounded by 
    \begin{align*}
        \sum_{B\in\BBB} &O\Big(|B|+ \sum_{v\in B}\sum_{a\in\Sigma} |\delta_a(v)|\Big) 
        = \sum_{v\in Q} \sum_{B\in \BBB : v\in B} O(1) 
            + O\Big(\sum_{v\in Q} \sum_{B\in\BBB : v\in B} \sum_{a\in\Sigma} |\delta_a(v)|\Big)\\
        &= O(|Q| \log |Q|)
            + O\Big( \log |Q| \cdot \sum_{v\in Q} \sum_{a\in\Sigma} |\delta_a(v)|\Big)
        = O(|\delta| \cdot \log |Q|).\qedhere
    \end{align*}
\end{proof}

\subsection{Data Structure Details}
\label{sec: eff implementation}
The nearly linear time implementation can be achieved in the same way as was described already by Paige and Tarjan~\cite{PaigeT87}. We briefly repeat the most important parts of their implementation.
Each element $u\in Q$ is represented by a record. Each pair $u,v\in Q$ such that $v\in \delta_a(u)$ is represented by a record, which we call an \emph{edge} (labeled $a$). Each part in partition $\PPP$ and in partition $\mathcal{X}$ is represented by a record. A part $S\in\mathcal{X}$ is called \emph{simple} if it contains only a single part of $\PPP$ and \emph{compound} otherwise. 
Each edge $u,v \in Q$ s.t.\ $v\in\delta_a(u)$ points to element $u$. Each element $v$ points to a list of the edges $u,v$ s.t.\ $v\in\delta_a(u)$, which allows scanning $\delta_a^{-1}(v)$ in time proportional to its size. Each part of $\PPP$ has an associated integer equal to its size, and points to a doubly linked list of the elements in it (which allows for constant deletion time). Each element points to the part in $\PPP$ that contains it. Each part in $\mathcal{X}$ points to a doubly linked list of the parts of $\PPP$ in it. Each part of $\PPP$ points to the part of $\mathcal{X}$ that contains it. Additionally, $\mathcal{C}$ is a set of compound parts of $\mathcal{X}$, which is initialized to $Q$ and maintained accordingly.

For the three-way splitting, for every part $S\in\mathcal{X}$ and every element $x\in \delta_a(S)$, we maintain a record containing $\texttt{count}_a(x,S)=|S\cap \delta_a^{-1}(x)|$, and $x$ points to $\texttt{count}_a(x,S)$. Each edge $y,x$ labeled $a$ (i.e., $x\in \delta_a(y)$) such that $y\in S$ points to $\texttt{count}_a(x,S)$. Initially, $x$ points to $\texttt{count}_a(x,Q)=|\delta_a^{-1}(x)|$ and each edge $y,x$ labeled $a$ also points to $\texttt{count}_a(x,Q)$.

The space needed for all the data structures is $O(|\delta|)$, the total number of transitions. The initialization time is also $O(|\delta|)$. The algorithm stops when $\mathcal{C}=\emptyset$ or equivalently $\PPP=\mathcal{X}$.
In what follows, we specify the refinement step in more detail.

\subparagraph{Implementation of Three-Way Split.} 
Copy the elements of $B$ into a temporary set $B'$. Then, for each $a\in \Sigma$:
    \begin{enumerate}
        \item (compute $\delta_a(B)$) Compute $\delta_a(B)$ by scanning the edges $y,x$ labeled $a$ for every $y\in B$, and adding each newly found element to $\delta_a(B)$. To avoid duplicates, mark encountered elements and link them together for later unmarking. During the scan, create and update $\texttt{count}_a(x,B)=|\{y\in B : x\in \delta_a(y)\}|$, and let $x$ point to this record.
        \item (refine $\PPP$ by $B$ w.r.t.\ $\delta_a$) Split each block $D\in\PPP$ that contains some element of $\delta_a(B)$ into $D_1$ and $D_2$. Do so by scanning the elements of $\delta_a(B)$ one by one. For $x\in \delta_a(B)$, determine the block $D$ that contains it, create an associated block $D'$ (if it does not already exist) and move $x$ to $D'$. During the scan, construct and maintain a list of split blocks. After the scan, process the list: mark $D'$ as no longer associated with $D$, delete $D$ if it is empty; if $D$ is not empty and the block of $\mathcal{X}$ containing $D$ and $D'$ has been made compound by the split, add this block to $\mathcal{C}$.
        \item (compute $\delta_a(B)\setminus \delta_a(S\setminus B)$) Scan the edges $y,x$ labeled $a$ where $y\in B'$. To process one such edge, determine $\texttt{count}_a(x,B)$ (to which $x$ points) and $\texttt{count}_a(x,S)$ (to which $y,x$ points). If these integers are equal, add $x$ to $\delta_a(B)\setminus \delta_a(S\setminus B)$ (if not added already).
        \item (refine $\PPP$ by $S\setminus B$ w.r.t.\ $\delta_a$) Proceed as in (b), but by scanning $\delta_a(B)\setminus \delta_a(S\setminus B)$.
        \item (update counts) Scan the edges $y,x$ labeled $a$ where $y\in B'$. Process them one by one, by decreasing $\texttt{count}_a(x,S)$ (to which $y,x$ points) by one. If the count becomes zero, delete the record. Make $y,x$ point to $\texttt{count}_a(x,B)$ (to which $x$ points). 
    \end{enumerate}
    Discard $B'$.

\section{Deferred Material for Section~\ref{sec: DFA co-lex order}}

\subsection{Motivation for computing the smallest-width partial order} \label{sec: app: DFA: motivation}

For an NFA $\AAA = (Q, \delta, s)$ and a co-lex order $\prec$ for $\AAA$, the width $p$ of $\prec$ is defined as the size of the largest antichain, i.e., set of pairwise \emph{incomparable} states, where we say that two states $u,v\in Q$ are incomparable if neither $u\prec v$ nor $v\prec u$ holds. According to Dilworth's theorem~\cite{dilworth1987decomposition}, the width of the co-lex order $\prec$ is equal to the cardinality of the smallest \emph{$\prec$-chain decomposition} of $Q$. Here, a partition $\{Q_i\}_{i\in[k]}$ of $Q$ is a $\prec$-chain decomposition of $Q$, if $Q_i$ is a $\prec$-chain for every $i\in [k]$, where a subset $S\subseteq Q$ is a $\prec$-chain if $\prec$ is a strict total order on $S$. Clearly, the width of a Wheeler order is 1 as no two states are incomparable.

Cotumaccio and Prezza~\cite{CotumaccioP21} call an NFA $\AAA$ \emph{$p$-sortable} if there exists a co-lex order for $\AAA$ that has width $p$. They then show that this parameter $p$ determines the complexity of a series of problems related to storing and indexing the automaton $\AAA$. As an example, they show how to store a $p$-sortable NFA using $\lceil \log |\Sigma| \rceil + \lceil \log p \rceil + 2$ bits per transition by extending the Burrows-Wheeler transform~\cite[Theorem 4.1]{CotumaccioP21} from strings to NFAs. The construction of this data structure however assumes to have the co-lex order $\prec$ and a corresponding $\prec$-chain decomposition of size $p$ as input. Unfortunately, it is NP-hard to determine if a given NFA is $p$-sortable for a given $p$. This follows from the NP-completeness of recognizing Wheeler NFAs, i.e., $1$-sortable NFAs. As a consequence, it is unrealistic to assume that one can compute a co-lex order $\prec$ and a corresponding $\prec$-chain decomposition for a given NFA. For DFAs however, Cotumaccio and Prezza showed that the co-lex order of minimum width is unique and can be computed in $O(|\delta|^2 + |Q|^{5/2})$ time. Kim et al.~\cite{KimOP23} improved this running time by describing two algorithms running in $O(|\delta| \cdot |Q|)$ and $O(|Q|^2 \cdot \log |Q|)$ time on arbitrary DFAs, and an algorithm running in $O(|\delta| \cdot \log |Q|)$ time on acyclic DFAs. All the mentioned algorithms, together with the co-lex order $\prec$ of minimum width $p$, also compute a $\prec$-chain decomposition of size $p$.

\subsection{Infimum and Supremum string}\label{sec: app: DFA: infima suprema}

To define infimum and supremum string, we first have to introduce left-infinite strings that we denote with $\Sigma^\omega$. A \emph{left-infinite string} over the alphabet $\Sigma$ is a sequence of letters from $\Sigma$ of infinite countable length that can be constructed by appending letters to the left of the empty string $\epsilon$. For a string $\alpha\in \Sigma^*$, the notation $\alpha^\omega$ is used to denote the string consisting of an infinite number of copies of the string $\alpha$. 
We denote with $\Sigma^\bullet:=\Sigma^* \cup \Sigma^\omega$ the set of all finite and left-infinite strings.
\begin{definition}[Infimum and Supremum]\label{def: inf sup}
    For a possibly infinite set of finite strings $S\subseteq \Sigma^*$ over an alphabet $\Sigma$, we define 
    \begin{align*}
        \inf S &:= \gamma \in \Sigma^\bullet \text{ s.t.\ } \gamma\ge \beta \text{ for all }\beta \in \Sigma^\bullet \text{ with } \beta \le \alpha \text{ for all }\alpha \in S, \text{ and }\\
        \sup S &:= \gamma \in \Sigma^\bullet \text{ s.t.\ } \gamma\le \beta \text{ for all }\beta \in \Sigma^\bullet \text{ with } \beta \ge \alpha \text{ for all }\alpha \in S.
    \end{align*}
\end{definition}

Notice that the infimum and supremum of a string set $S_v$ for some $v\in Q$ possibly have infinite length, see Figure~\ref{fig: example inf} for a simple example. Intuitively speaking, one can encode $\inf S_v$ (and $\sup S_v$ analogously) using the NFA: The string is obtained by walking backwards from $v\in Q$ by prepending the label of the in-going transition. When there are more than one in-going transitions, we take an \emph{appropriate} branch to yield the smallest string, which can be determined only after sorting the states by their infima strings. When this walk forms a loop then a string with infinite length is obtained.
\begin{figure}[!htb]
\centering
\resizebox{0.3\textwidth}{!}{
\begin{tikzpicture}[->,>=stealth', semithick, auto, scale=.8]
    \node[state, label=above:{}] (S) at (-2,0){$s$};
    \node[state, label=above:{}] (Q_12) at (0,0) {$1$};
    \node[state, label=above:{}] (Q_3) at (2,0)	{$2$};
    \draw (S) edge node[above] {$b$} (Q_12);
    \draw (Q_12) edge node[above] {$a$} (Q_3);
    \draw (Q_3) edge  [loop above] node {$a$} (Q_3);
    
\end{tikzpicture}
}
\caption{An example DFA with a node ($v=2$) whose infimum string $\inf S_v=a^\omega$ is of infinite length.}
\label{fig: example inf}
\end{figure}
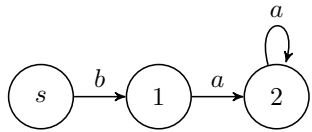

\subsection{Deferred Proofs} \label{sec: app: DFA: proofs}

\LemAAprime*
\begin{proof}
We prove the statement by induction over $i$.
    The base case, i.e., $i=0$, corresponds to the beginning of the first iteration, then $Q$ is the only element in $\XXX$ and thus $Q=A=A'$.
    Let us now assume that the claim holds for $i\ge 0$, i.e., $v\in\delta^{(i)}_a(P)\cap\delta_a(P')$  for some $P,P'\in\XXX^{(i)}$ implies $P = P'$ or $P < P'$.
    We now prove the statement for $i+1$, more precisely, we show that, if $A>A'$ for two parts from $\XXX^{(i+1)}$, then $v\notin\delta^{(i+1)}_a(A)\cap\delta_a(A')$ (which is the contraposition of the claim to be shown). Now let $A$ and $A'$ be two parts from $\XXX^{(i+1)}$ with $A>A'$ and let $S\in\XXX^{(i)}$ be the part chosen in line~\ref{line: choose S} in iteration $i+1$ of the algorithm. There are two cases. (1) Assume that at least one of $A$ and $A'$ is not contained in $S$. Then, there are two parts $P$ and $P'$ in $\XXX^{(i)}$ such that $A\subseteq P$ and $A' \subseteq P'$ and $P>P'$. By the induction hypothesis, it follows that $v\notin\delta^{(i)}_a(P)\cap\delta_a(P')$. This yields $v\notin\delta^{(i+1)}_a(A)\cap\delta_a(A')$, as $\delta^{(i + 1)}_a(P)\subseteq \delta^{(i)}_a(P)$, $\delta^{(i+1)}_a(A)\subseteq \delta^{(i+1)}_a(P)$, as well as $\delta_a(A')\subseteq \delta_a(P')$.
    (2) Now assume that both $A$ and $A'$ are contained in $S$. Then one of $A$ and $A'$ must be $B$ and the other one  $S\setminus B$. Following Observation~\ref{obs: invariant}, all parts of $\PPP^{(i)}$ are forward-stable with respect to $S$ and hence $v\in \delta^{(i)}_a(A)$ implies $v\in \delta^{(i)}_a(A')$. As a consequence of the pruning step, the transitions from $A$ to $v$ are then deleted in iteration $i + 1$ and hence $v\notin\delta^{(i+1)}_a(A)$ and consequently $v\notin\delta^{(i+1)}_a(A)\cap\delta_a(A')$.
\end{proof}

\LemInfimaEncoding*
\begin{proof}
In addition to $\delta^{(i)}$, $\XXX^{(i)}$, and $\PPP^{(i)}$, let $\prec^{(i)}$ be the strict partial order induced by ordered partition $\PPP^{(i)}$; i.e., for $u,v\in Q$, $u\prec^{(i)} v$ if and only if $u\in A$ and $v\in A'$ for two parts $A$ and $A'$ in $\PPP^{(i)}$ such that $A$ precedes $A'$. For convenience, we also define $\prec^{(0)}=\{(u,v)\in Q^2:\lambda(u)<\lambda(v)\}$. First, we claim that  $\prec^{(i)}$ is a co-lex order of $\AAA^{(i)}$.
\begin{claim}\label{lem: colex prunedDFA}
    The strict partial order $\prec^{(i)}$ is a co-lex order for the automaton $\AAA^{(i)}=(Q, \delta^{(i)}, s)$.
\end{claim}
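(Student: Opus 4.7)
The plan is to prove the claim by induction on $i$.

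For the base case $i=0$, recall that $\prec^{(0)}$ orders states by their incoming letter and $\AAA^{(0)}=\AAA$. Axiom (1) of Definition~\ref{def: colex order} is immediate from input-consistency: $v\in\delta_a(u)$ forces $\lambda(v)=a$, so $a<a'$ yields $\lambda(v)<\lambda(v')$ and hence $v\prec^{(0)} v'$. Axiom (2) holds vacuously since its hypothesis $v\prec^{(0)} v'$ would demand $\lambda(v)<\lambda(v')$, contradicting $\lambda(v)=\lambda(v')=a$. The source-state condition holds since $\lambda(s)=\epsilon$ is the minimum letter.

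For the inductive step, I assume $\prec^{(i)}$ is a co-lex order for $\AAA^{(i)}$ and let $S\in\XXX^{(i)}$ and $B\in\PPP^{(i)}$ denote the compound block and splitter chosen in iteration $i+1$; without loss of generality $B$ is the first block of $S$ in $\PPP^{(i)}$ (the other case is symmetric). Since $\delta^{(i+1)}\subseteq\delta^{(i)}$ and $\PPP^{(i+1)}$ refines $\PPP^{(i)}$, the relation $\prec^{(i+1)}$ extends $\prec^{(i)}$, so axiom (1) and the source-state condition transfer from the induction hypothesis directly. For axiom (2), I take $v\in\delta^{(i+1)}_a(u), v'\in\delta^{(i+1)}_a(u')$ with $v\prec^{(i+1)} v'$ and $u\neq u'$, aiming to show $u\prec^{(i+1)} u'$. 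If $v\prec^{(i)} v'$ already, the induction hypothesis applied in $\AAA^{(i)}$ gives $u\prec^{(i)} u'$ and therefore $u\prec^{(i+1)} u'$. Otherwise $v, v'$ lie in a common block $D\in\PPP^{(i)}$ split in iteration $i+1$, and the pair falls into one of three subcases dictated by the new internal order $\langle D_{12}, D_{11}, D_2\rangle$.

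To handle these subcases I combine three ingredients: Observation~\ref{obs: invariant} (forward-stability of $\PPP^{(i)}$ with respect to $\XXX^{(i)}$ in $\AAA^{(i)}$); the pruning step, which removes all $S\setminus B\to D_{11}$ edges from $\delta^{(i+1)}$; and Lemma~\ref{lem: AAprime} applied to both $v$ and $v'$. Applied in both directions, Lemma~\ref{lem: AAprime} forces the $\XXX^{(i)}$-parts of $u$ and $u'$ to equal $S$ in every subcase where the existence of a predecessor of $v$ (respectively $v'$) inside $S$ in $\AAA^{(i)}$ is guaranteed. Combined with the pruning (which rules out $u'\in S\setminus B$ whenever $v'\in D_{11}$) and the split's definition (which rules out $u\in B$ whenever $v\in D_2$), this localises $u$ into $B$ and $u'$ into $S\setminus B$ in the subcases $v\in D_{12}, v'\in D_2$ and $v\in D_{11}, v'\in D_2$; the desired $u\prec^{(i+1)} u'$ then follows because in $\PPP^{(i+1)}$ the sub-parts of $B$ precede those of $S\setminus B$ within $S$.

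The main obstacle is the remaining subcase $v\in D_{12}, v'\in D_{11}$, where pruning and forward-stability alone leave open the possibility that $u$ and $u'$ both lie in $B$. Here the DFA assumption (input-consistency together with determinism) becomes essential, since any two distinct states sharing a common incoming letter have disjoint predecessor sets; I expect that exploiting this disjointness, together with a careful analysis of how the self-split of $B$ in iteration $i+1$ and earlier refinement steps must already have separated such co-residents of $B$ into the correctly ordered sub-parts, is what ultimately yields $u\prec^{(i+1)} u'$ in this final subcase. Formalising this cascade invariant in lockstep with the main induction is the technical heart of the argument.
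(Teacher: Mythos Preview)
Your induction setup and the base case are fine, and your handling of the case $v\in D_{12},\,v'\in D_2$ is essentially the paper's argument. The gap is in what you call ``the main obstacle'': the subcase $v\in D_{12},\,v'\in D_{11}$ does not exist. The entire point of the pruning step is that, with the sets $D_1,D_2,D_{11},D_{12}$ defined via the \emph{pruned} transition $\delta^{(i+1)}$ (exactly as the paper does in this proof), one has $D_{11}=D_1\cap\delta^{(i+1)}_a(S\setminus B)=\emptyset$: every state that was reachable from both $B$ and $S\setminus B$ has had its $S\setminus B$-edges deleted (when $B$ is first; symmetrically otherwise). Hence $\PPP^{(i+1)}$ replaces $D$ by only two parts $\langle D_{12},D_2\rangle=\langle D_1,D_2\rangle$, and the only case to verify for Axiom~(2) is $v\in D_1,\,v'\in D_2$. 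Your subcase analysis with three parts comes from reading the split through $\delta^{(i)}$ rather than $\delta^{(i+1)}$.

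Because you miss this simplification, you are led to invoke determinism and a ``cascade invariant'' that you do not formalise. Neither is needed: the paper's proof of this claim uses only the pruning (giving $D_{11}=\emptyset$), Lemma~\ref{lem: AAprime} at index $i+1$ (giving that all $\delta^{(i+1)}$-predecessors of $v$ lie in a single block of $\XXX^{(i+1)}$, namely $B$, and those of $v'$ lie in $S\setminus B$), and the fact that $B$ precedes $S\setminus B$ in $\PPP^{(i+1)}$. In particular, the DFA hypothesis plays no role in this claim.
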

\begin{claimproof}
    We will show by induction over $i$ that $\prec^{(i)}$ is a co-lex order for $\AAA^{(i)}$. The order $\prec^{(0)}$ corresponding to the partition $\PPP^{(0)}=\langle Q_\epsilon, Q_{a_1}, \dots, Q_{a_k}\rangle$ clearly is a co-lex order for $\AAA^{(0)}$ as Axiom~\eqref{first colex axiom} in Definition~\ref{def: colex order} is satisfied and the premise of Axiom~\eqref{second colex axiom} never holds.
    Assume now that $\prec^{(i)}$ is a co-lex order for some intermediate iteration $i$. We will now prove that the order $\prec^{(i+1)}$ corresponding to $\PPP^{(i+1)}$ is a co-lex order. In order to show that $\prec^{(i+1)}$ is a co-lex order, we have to show that Axiom~\eqref{second colex axiom} is maintained. More precisely, for every pair of states $u, v$ with $\lambda(u)=\lambda(v)=a$ for which $u\prec^{(i+1)}v$, we have to show that $x \prec^{(i+1)} y$ for all distinct predecessors $x$ and $y$ of $u$ and $v$, respectively. Hence, let $u,v\in D$ for $D\in\PPP^{(i)}$, as otherwise the statement follows by the induction hypothesis. As in the algorithm, let $D_{1}=D\cap \delta^{(i+1)}_a(B)$, $D_2=D\setminus D_1$, and $D_{11}= D_1\cap \delta^{(i+1)}_a(S\setminus B'))$ and $D_{12}= D_1\setminus D_{11}$, where $S$ is the compound block from $\XXX^{(i)}$ and $B$ be is the splitter chosen in iteration $i+1$. Notice that due to the pruning step,   $D_{11}=\emptyset$. If $u$ and $v$ are contained in the same out of the two sets $D_{12}$ and $D_2$, nothing is to be shown. Hence, assume that $u$ and $v$ are in different sets, w.l.o.g., say $u\in D_{12}$ and $v\in D_2$. Now let $x$ and $y$ be two distinct predecessors of $u$ and $v$ respectively.
    According to Lemma~\ref{lem: AAprime}, all predecessors of $u$ and $v$, and hence also $x$ and $y$, are in a unique part $A$ and $A'$ of $\XXX^{(i+1)}$. As $u\in D_{12}$ and $v\in D_2$, it follows that $x\in B$ and $y\in S\setminus B$. 
    Let us now first assume that $B$ is the first block in $S$ in line~\ref{line: first or last}, i.e., the decided order of $u$ and $v$ is $u\prec^{(i+1)} v$ as $D_{12}$ precedes $D_2$ in $\PPP^{(i+1)}$. As $B$ precedes all blocks in $S\setminus B$ in $\PPP^{(i)}$, we have that $x\prec^{(i)} y$ and thus also $x\prec^{(i+1)} y$. 
    If, instead $B$ is the last block in $S$ in line~\ref{line: first or last}, i.e., the decided order of $u$ and $v$ is $v\prec^{(i+1)} u$ as $D_2$ precedes $D_{12}$ in $\PPP^{(i+1)}$. 
    As $B$ succeeds all blocks in $S\setminus B$ in $\PPP^{(i)}$, we have that $y\prec^{(i)} x$ and thus also $y\prec^{(i+1)} x$. 
    In summary, property~\ref{second colex axiom} of co-lex orders holds in both cases and thus $\prec^{(i+1)}$ is a co-lex order for $\AAA^{(i+1)}$.
\end{claimproof}

    We can now prove Lemma~\ref{lem: infima encoding} using the fact that $\prec^{(i)}$ is a co-lex order of $\AAA^{(i)}$.
    First, we show that $\alpha^*_u$ is a lower bound on all strings in $S_u$, i.e., for every $\alpha\in S_u$, it holds that $\alpha^*_u\le \alpha$.
    Assume for contradiction that there is a string $\alpha\in S_u$ such that $\alpha<\alpha^*_u$.
    Let $1\le i \le |\alpha|$ be the smallest integer such that, for some $v_1,v_2\in Q$, $\alpha[1..i-1]\in S_{v_2}$, $v_1\in\delta_{\alpha[i]}(v_2)$, and $u\in\delta^*_{\alpha'}(v_1)$ where $\alpha'=\alpha[i+1..|\alpha|]$. Intuitively, $v_1$ is the farthest state (from the back) on the path spelling $\alpha$ that can reach $u$ in the pruned automaton $\AAA^*$. Note that such states $v_1$ and $v_2$ must exist as otherwise $\alpha=\alpha^*_u$. Clearly, $v_1\neq s$ as it has an in-going transition. Now by  
    Lemma~\ref{lem: AAprime}, 
    there exists a state $v_3\in Q$ with $v_1\in\delta^*_{\alpha[i]}(v_3)$. Notice that by the minimality of $i$, it holds that $v_1\notin\delta^*_{\alpha[i]}(v_2)$ and thus $v_2\neq v_3$ and, furthermore, by Lemma~\ref{lem: AAprime}, it holds that $v_3 \prec^* v_2$. 
    Let $j$ be the integer such that $\alpha[j..|\alpha|]$ is the longest common suffix of $\alpha$ and $\alpha_u^*$. Note that $j<i$ must hold. Let $v_2'\in Q$ be a state such that $\alpha[1..j-1]\in S_{v_2'}$ and $v_2\in\delta_{\alpha[j..i-1]}(v_2')$
    and let $v_3'\in Q$ be a state such that $v_3\in\delta^*_{\alpha[j..i-1]}(v_3')$. Since $\alpha<\alpha^*_u$ and $\alpha[j..|\alpha|]$ is their longest common suffix, we either have $v_2'=s$ and thus $v_2'\prec^* v_3'$ or $\lambda(v_2')<\lambda(v_3')$. In the second case, it also follows that $v_2'\prec^* v_3'$ by Axiom~\eqref{first colex axiom} in Definition~\ref{def: colex order}. On the other side, we can deduce $v_3'\prec^* v_2'$ from $v_3\prec^* v_2$ using Property \ref{second colex axiom} of Definition~\ref{def: colex order}, a contradiction. See Figure~\ref{fig: proof lemma17a}.

    \begin{figure}[t]
        \centering
        \includegraphics[page=1]{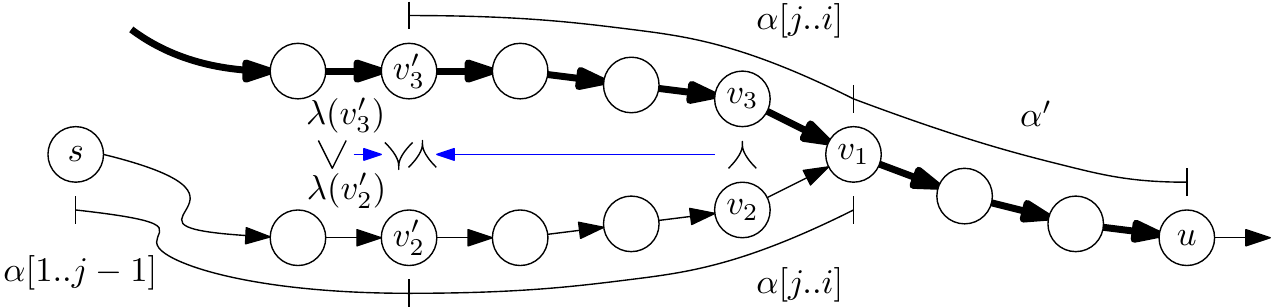}
        \caption{Proof of that $\alpha^*_u$ is a lower bound. Thick transitions indicate $\delta^*$. If we assume that there exists $\alpha\in S_u$ such that $\alpha<\alpha^*_u$, we obtain (from the co-lex property of $\prec^*$) the contradiction that both $v_2'\prec^* v_3'$ and $v_3'\prec^* v_2'$ hold.} \label{fig: proof lemma17a}
    \end{figure}

    Next, we show that $\alpha^*_u$ is the greatest lower bound, i.e., for every lower bound $\alpha$ of $S_u$, it holds that $\alpha\le\alpha^*_u$.
    Assume for contradiction that there exists a lower bound $\alpha$ of $S_u$ such that $\alpha>\alpha^*_u$. 
    If $\alpha^*_u$ is a proper suffix of $\alpha$, then $\alpha^*_u$ must be a finite string from $S_u$. 
    Thus, $\alpha$ is not a lower bound of $S_u$, a contradiction. Now let us consider the case that $\alpha^*_u$ is not a proper suffix of $\alpha$. 
    Let $\alpha_1\in\Sigma^*$ be the longest common suffix of $\alpha$ and $\alpha^*_u$ and let $\alpha_2,\alpha_3\in\Sigma^\bullet$ be the strings such that $\alpha=\alpha_2\cdot\alpha_1$ and $\alpha^*_u=\alpha_3\cdot\alpha_1$. Let $a$ and $b$ be the rightmost character of $\alpha_2$ and $\alpha_3$, respectively. Since $\alpha>\alpha^*_u$, it must hold $a>b$. However there must exist $v\in Q$ such that $u\in\delta_{b\cdot\alpha_1}(v)$ because $\alpha_u^*$ is suffixed by ${b\cdot\alpha_1}$. And since we assume that every state is reachable from the source state $s$, there exists a string $\beta\in\Sigma^*$ such that $v\in\delta_\beta(s)$; consequently, $\beta\cdot b\cdot \alpha_1\in S_u$. However, because $\alpha = \alpha_2\cdot a\cdot\alpha_1 > \beta\cdot b\cdot \alpha_1$, $\alpha$ is no longer a lower bound of $S_u$, a contradiction.
\end{proof}

\subsection{Suffix doubling algorithm on pruned automata} \label{sec: app: DFA: suffix doubling}

Suppose we have computed the pruned automata $\AAA^{\inf}=(Q^{\inf},\delta^{\inf},s^{\inf})$ for infima, and $\AAA^{\sup}=(Q^{\sup},\delta^{\sup},s^{\sup})$ for suprema, according to the algorithm described in Section~\ref{sec: DFA co-lex order}.

Note that, for every $u\in Q^{\inf}$, $\alpha^*_u$ is the infimum string of its associated state on the original input automaton. Similarly, for every $u\in Q^{\sup}$, $\alpha^*_u$ corresponds to the supremum string.
For convenience, we denote by $\hat\AAA=(\hat Q, \hat\delta, \hat s)$ the graph obtained by making two copies of the input DFA $\AAA$, then computing the pruned automaton $\AAA^{\inf}$ on one of them and $\AAA^{\sup}$ on the other one.

At every iteration $k\ge0$, we sort the states $\hat Q$ according to the length-$2^k$ suffixes of their associated strings ($a^*_u$'s).
For the first iteration $k=0$, states are sorted by their incoming label, and each state $u\in \hat Q$ has a pointer $\varphi(u)\in \hat \delta^{-1}_{\lambda(u)}(u)$ to any of its predecessor with respect to the pruned transitions; for the source states, we define $\varphi(s^{\inf})=s^{\inf}$ and $\varphi(s^{\sup})=s^{\sup}$. For next iterations $k\ge1$, we use the relative rank of the length-$2^{k-1}$ suffixes in order to sort the states by the length-$2^{k}$ suffixes of their associated strings. This can be simply done using the fact that the length-$2^k$ suffix of $a^*_u$ is the concatenation of the length-$2^{k-1}$ suffixes of $a^*_{\varphi(u)}$ and $a^*_{u}$. Once we know the relative rank of the length-$2^{k-1}$ suffixes of the states, we can compute the rank for the next iteration using 2-pass radix sort in linear time. Then for each state $u\in \hat Q$, we update its pointer with $\varphi(\varphi(u))$ for further extensions of the suffixes in the next iterations.
Regarding the number of iterations we need to repeat, we can use the following lemma.

\begin{lemma}[\cite{KimOP23}]\label{lem: suffix sort length}
The co-lex order of the infima and suprema strings of a DFA $\AAA=(Q,\delta,s)$ is the same as the co-lex order of their length-$(2\cdot|Q|)$ suffixes.
\end{lemma}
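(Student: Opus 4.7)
The plan is to show that any two distinct strings $\alpha^*_u, \alpha^*_v \in \IS_\AAA$ must differ within their rightmost $2|Q|$ characters, so that their length-$2|Q|$ co-lex suffixes already pin down their co-lex order. I would express each string as the label sequence of a deterministic backward walk $u_0 = u,\ u_1 = \varphi(u_0),\ u_2 = \varphi(u_1),\ldots$ in the relevant pruned automaton (Lemma~\ref{lem: infima encoding}): the $j$-th character from the right of $\alpha^*_u$ is $\lambda(u_j)$, and the walk is either infinite or terminates at the source. A first key observation is that since $\varphi$ is a functional partial map, whenever $u_j = v_j$ at some step the walks coincide thereafter and $\alpha^*_u = \alpha^*_v$; contrapositively, if $\alpha^*_u \neq \alpha^*_v$ then $u_j \neq v_j$ at every position where both walks are defined.

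I would then handle the easy case where at least one walk terminates: since each pruned automaton has $|Q|$ states, any terminating walk has length at most $|Q|$, so the whole finite string lies within its length-$2|Q|$ suffix and any disagreement appears there. For the main case of two infinite walks, I would use the functional-graph structure: each walk enters a cycle of length $c \le |Q|$ after a transient of length $T \le |Q|-c$, so both label sequences are eventually periodic with period at most $|Q|$ starting from position at most $|Q|-1$. Assume for contradiction that labels agree on positions $0, \ldots, 2|Q|-1$. I would split according to whether $u$ and $v$ lie in the same pruned automaton: if they do, the cycles $C_u, C_v$ are either equal (in which case the non-synchronization constraint forces the cycle labels to admit a period strictly smaller than the cycle length, so that one cycle-length window of agreement is enough to propagate global agreement) or disjoint in a pool of $|Q|$ states (so $c_u + c_v \le |Q|$, and Fine--Wilf turns the $|Q|+1$ agreeing positions in the cyclic parts into global periodicity with period $\gcd(c_u, c_v)$, hence global agreement). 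Either way we contradict $\alpha^*_u \neq \alpha^*_v$.

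The hard sub-case is the mixed one with $u \in Q^{\inf}$ and $v \in Q^{\sup}$: now the cycles live in disjoint state sets of combined size up to $2|Q|$, and a purely automaton-theoretic Fine--Wilf argument would need a cyclic-part window of $c_u+c_v-\gcd(c_u,c_v)$ positions, potentially $2|Q|-1$, which exceeds the $|Q|+1$ cyclic-part positions produced by $2|Q|$ total agreeing characters. To close this, I would return to the semantic meaning of the strings: $\alpha^*_u = \inf S_{[u]}$ and $\alpha^*_v = \sup S_{[v]}$ in the original DFA, and a string that simultaneously infimum-approaches $[u]$ and supremum-approaches $[v]$ over a long co-lex window must be the common limit of two backward walks in the original DFA whose predecessor choices (minimum versus maximum) eventually coincide. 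Exploiting that the original DFA has only $|Q|$ states, together with the fact that $|\IS_\AAA| \le 2|Q|$, I would bound the first index of disagreement by $2|Q|-1$ and complete the contradiction. This mixed sub-case is the main obstacle, because abstract periodicity alone is not tight enough; it is the semantic infimum/supremum interpretation that compensates and yields exactly $2|Q|$.
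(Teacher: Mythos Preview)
The paper does not prove this lemma at all: it is quoted verbatim from \cite{KimOP23} and used as a black box (see Appendix~D.4, where the lemma is stated and immediately followed by ``Consequently, after $O(\log|Q|)$ iterations\ldots''). So there is no proof in the present paper for your attempt to be compared against.

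That said, your proposal has a genuine gap precisely where you flag the difficulty. Your handling of the terminating-walk case and of the two-walks-in-the-same-pruned-automaton case (disjoint cycles via Fine--Wilf, with $c_u+c_v\le|Q|$) is essentially sound, though the same-cycle sub-case deserves a cleaner argument than ``one cycle-length window of agreement is enough.'' The real problem is the mixed case $u\in Q^{\inf}$, $v\in Q^{\sup}$. You correctly observe that Fine--Wilf alone needs roughly $c_u+c_v$ agreeing positions inside the periodic part, which can be close to $2|Q|$, while only about $|Q|+1$ such positions are guaranteed by a length-$2|Q|$ suffix match. Your proposed remedy---``the semantic infimum/supremum interpretation compensates,'' ``predecessor choices (minimum versus maximum) eventually coincide,'' and ``exploiting that the original DFA has only $|Q|$ states \ldots\ I would bound the first index of disagreement by $2|Q|-1$''---is not an argument but a statement of what you hope to prove. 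Nothing in the proposal explains \emph{why} minimum- and maximum-predecessor choices must synchronise within $2|Q|$ steps, and the observation that $|\IS_\AAA|\le 2|Q|$ does not by itself bound where two specific strings first differ. As written, the mixed sub-case is unproved, and since you yourself identify it as the crux, the proposal does not establish the lemma.
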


Consequently, after $O(\log |Q|)$ iterations, we obtain the co-lex order of $\{\inf S_u:u\in Q\}\cup\{\sup S_u:u\in Q\}$ associated with the input DFA's states. According to~\cite[Theorem 10]{KimOP23} it characterizes the smallest-width co-lex order $\prec_\AAA$ of the input DFA $\AAA$. Then plugging the algorithm of \cite[Sec.~3.3]{KimOP23}, we can obtain a minimum chain partition of $\prec_\AAA$.

For the time complexity, we can compute the pruned automata each for the infima strings and suprema strings in $O(|\delta|\log|Q|)$ time as shown in Section~\ref{sec: DFA co-lex order}. Then the suffix doubling algorithm described in this section takes $O(|Q|\log|Q|)$ time; because $O(\log|Q|)$ iterations are required, and each iteration requires a 2-pass radix sort and updating pointers, each of which takes $O(|Q|)$ time. Notice that after sorting these strings, we can compute a minimum chain partition of $\prec_\AAA$ in linear time \cite[Section~3.3]{KimOP23} by solving an interval graph colouring problem.

\begin{remark}
One may wonder if our result contradicts the one in \cite[Section 4]{KimOP23}. In particular, (i) a similar algorithm runs in $O(|Q|^2\log|Q|)$ time, and (ii) the merged graph described here is no longer a deterministic and connected automaton, i.e.\ there are two source states, and some states are not reachable from a source state. The difference is that in our case the pruned automata are already computed. In \cite{KimOP23}, each state should keep track of $O(|Q|)$ pointers (instead of only one as in our case) because there are possibly many strings (implicitly) represented by backward walks in intermediate steps that share the same suffixes with the infima and suprema strings. On the other hand, in our case, it is guaranteed that, for each state, the state referred by its pointer is exactly on a walk representing the desired infimum (or supremum) string. As a consequence, we can obtain $O(|Q|\log|Q|)$ time. Regarding the determinism, the only problem caused with the suffix doubling algorithm in \cite{KimOP23} with an NFA is the procedure for updating the pointers. In particular, the union of the updated pointers may take additional time in case of NFAs. As mentioned above, it is also not the case for our algorithm because it is sufficient to keep only one pointer on the (already) pruned automata.
\end{remark}
    
\section{Special Case of String as Input}

Our partition refinement algorithm (\Cref{alg: partition refinement}) can be used to co-lexicographically sort the prefixes of a string $S$ as follows\footnote{We can compute the popular suffix array by co-lexicographically sorting the prefixes of the reverse of $S$.}. We encode $S$ as a path DFA spelling $S$ and run \Cref{alg: partition refinement} in time $O(|\delta|\log{|Q|})=O(|S|\log{|S|})$. The algorithm will output an ordered partition of $Q$, where each part is a singleton set, and thus the order will correspond to the co-lexicographic order of the prefixes of $S$. This approach resembles the algorithm of Baier~\cite{baier:cpm16}, which uses partition refinement on the suffixes of $S$ to obtain the suffix array. However, Baier's algorithm further uses the structural properties of the DFA path (the string $S$) to devise a particular strategy of refinement achieving a linear $O(|S|)$ running time: it starts with the partition induced by the incoming labels (as in \Cref{alg: partition refinement}), but then it proceeds to refine this partition until the point where the suffixes of each part share a well-defined prefixes (called context in the original paper), and finally it uses this property of the current partition to efficiently obtain the final partition (the suffix array). A further difference to our approach is that Baier's method is based on lexicographic order, and thus the choice of the splitter and the split block differs from ours.

\end{document}